\newcommand{\ie}{\emph{i.e. }}
\newtheorem{theorem}{Theorem}
\newtheorem{lemma}[theorem]{Lemma}
\newtheorem{corollary}[theorem]{Corollary}
\newtheorem{observation}{Observation}
\newtheorem{definition}{Definition}
\newtheorem{example}{Example}
\begin{document}
%
\title{Binary Representaion for Non-binary LDPC Code with Decoder Design}
\author{{Yang Yu, Wen~Chen, Jun~Li, Xiao~Ma, and Baoming~Bai}
\thanks{Yang Yu, and Wen Chen are with
Network Coding and Transmission Laboratory, Shanghai Jiao Tong University,
Shanghai, 200240, China,
e-mail: \{yuyang83, wenchen\}@sjtu.edu.cn.}
\thanks{Jun Li is with School of Electrical and Information Engineering, The University of Sydney, NSW, Australia. e-mail: jun.li1@sydney.edu.au.}
\thanks{Xiao~Ma is with the Department of Electronics and Communication Engineering, Sun Yat-sen University, Guangzhou, China. e-mail: maxiao@mail.sysu.edu.cn.}
\thanks{Baoming Bai is with the State Key Lab of ISN, Xidian University, Xi’an, China.
e-mail: bmbai@mail.xidian.edu.cn.}
}

\markboth{Submitted to IEEE Transactions on Communications}%
{\MakeLowercase{Bit-Level Cycle Elimination for Non-binary and Binary LDPC Codes with Decoders Design}}

\maketitle

\begin{abstract}
The equivalent binary parity check matrices for the binary images of the cycle-free non-binary LDPC codes have numerous bit-level cycles.
In this paper, we show how to transform these binary parity check matrices into their cycle-free forms.
It is shown that the proposed methodology can be adopted not only for the binary images of non-binary LDPC codes but also for a large class of binary LDPC codes. Specifically, we present an extended $p$-reducible (EPR) LDPC code structure to eliminate the bit-level cycles.
For the non-binary LDPC codes with short length symbol-level cycles, the EPR-LDPC codes can largely avoid the corresponding short length bit-level cycles. As to the decoding of the EPR-LDPC codes, we propose a hybrid hard-decision decoder and a hybrid parallel decoder for binary symmetric channel and binary input Gaussian channel, respectively. A simple code optimization algorithm for these binary decoders is also provided. Simulations show the comparative results and justify the advantages, i.e., better performance and lower decoding complexity, of the proposed binary constructions.
\end{abstract}

\begin{IEEEkeywords}
Non-binary LDPC, binary image, binary Gaussian channel, binary erasure channel.

\end{IEEEkeywords}

%
\IEEEpeerreviewmaketitle
\section{Introduction} 
\label{sec:introduction}
Low density parity check (LDPC) codes, as a class of forward error control codes, have gained considerable attentions during the last decade due to their amazing decoding performance under different channels~\cite{Rchardson01DensityEvolution,Davey98montecarlo}. The performance of an infinite long LDPC code is usually evaluated in terms of the threshold for the average performance of its code ensemble (codes with the same degree distributions) based on the cycle-free condition~\cite{Rchardson01DensityEvolution,Brink04Exit,Brannstrom05Convergence,ge09NLDPC,Sassatelli10Hybrid,sv08DensityEvolution}.

The LDPC codes will suffer from performance degradation if there exist non-negligible number of short length cycles in their parity check matrices, especially for the short block length codes.
Moreover, codes with large girths will have respectable minimum/stopping distance bound, which also implies enhanced decoding performance.
In this paper, we refer to the cycles in the binary parity check matrices as \emph{bit-level} cycles and the cycles in the non-binary parity check matrices as \emph{symbol-level} cycles.
In \cite{Yige2013Girth,Christian2009QcGirth,Guohua2010QcGirth}, the authors show how to construct the parity check matrices with less bit-level cycles and large girths for binary LDPC codes.
For the non-binary LDPC codes, investigations indicate that they could have sparser Tanner graphs as the field size increases. For short to moderate block lengths, the non-binary LDPC codes with sparser graphs are more likely to outperform the binary ones.
In~\cite{Huang09Cycle,Poulliat08BinImage}, the authors investigate a particular type of non-binary LDPC codes, \ie non-binary cycle LDPC codes, whose column weights are two. In~\cite{Huang09Cycle}, optimizations for this type of codes are performed over Cayley-graph. In~\cite{Poulliat08BinImage}, the authors propose bit-level coefficients selection methods to optimize the symbol-level performance for the non-binary cycle LDPC codes.

On the other hand, soft-decision decoding for the non-binary LDPC codes requires a potentially higher complexity. The complexity of the $q$-ary sum-product decoding algorithm (QSPA) is $O(q^2)$ for each check-sum operation. The Fourier transform QSPA reduces the complexity to $O(q\log q)$~\cite{ge09NLDPC}. The extended min-sum (EMS) algorithm in \cite{Declercq2007EMS} further reduces the complexity to $O(n_m\log n_m)$ at the cost of a bit performance loss, where $n_m$ is smaller than $q$.
However, the computational complexity of the EMS decoder is still very high compared to the binary decoder.
Hence, in~\cite{Savin09Erasure,SV11ExtendedErasure}, the authors propose an extended binary representation for the non-binary LDPC code which can be decoded by binary decoders. The binary decoding complexity is only $O(q)$ for BEC.
Theoretically, based on the decoding error probability, the authors in~\cite{smith10Pct,Yang12ApproximatedPCT} prove that the minimal decoding complexities exist if the LDPC codes are constructed with properly chosen degree distributions.

\subsection{Related Works} 
\label{sub:related_works}
The codewords of a non-binary LDPC code are often transmitted over binary input channels in their bit-vector forms, i.e., binary images of the non-binary LDPC codes. At the receiver side, the non-binary decoder needs to transform the received bit sequences back to their non-binary forms to perform the symbol-level decoding\cite{Davey98montecarlo,Sassatelli10Hybrid,Voicila2008SplitCode,Poulliat08BinImage,Xiuni09FastGldpc} for retrieving the information bits.
On the other hand, as an alternative of using the non-binary decoders for binary input channels, one can use a binary decoder to retrieve the information bits by utilizing the binary representations of the non-binary parity check matrices for the purpose of reducing the computational complexity\cite{Savin09Erasure,SV11ExtendedErasure,Yang12Erasure}.
Especially in certain cases, when the receiver receives a non-binary codeword (e.g., moderate to long block length where the non-binary decoders do not have a clear advantage compared to the binary decoders) from the binary input channels and only limited computational resources are available, the consideration of using binary decoders is natural and practical for a fast and correct information recovery.
However, the binary representation of a non-binary parity check matrix has numerous bit-level cycles, even if there is no symbol-level cycle~\cite{Yang12Erasure,Savin09Erasure} in the non-binary parity check matrix.
Thus, in~\cite{Savin09Erasure,SV11ExtendedErasure}, the authors introduce the (punctured) extended binary representation for the non-binary LDPC code to solve this issue. When there is no symbol-level cycle, this representation will also be cycle-free.
In~\cite{Yang12Erasure}, the authors propose a hybrid hard decision decoder particularly for the BEC which eliminates the local decoding cycles by introducing matrix inverse operations. Decoding in~\cite{Yang12Erasure} has lower computational complexity than the decoding of the extended binary representation.
In addition, the authors in~\cite{Bhatia2011Extended} show how to optimize the binary representation of a non-binary parity check matrix with the perspective of stopping set.

\subsection{Contributions} 
\label{sub:contributions}
In this paper, we aim at further improving the bit-level decoding performance and reducing the bit-level decoding complexity. We propose bit-level decoders for different channel models to achieve enhanced decoding performance and develop a new methodology to construct the generalized binary representation to avoid the short length bit-level cycles.
Specifically, we propose a hybrid hard-decision decoder and a hybrid parallel decoder for binary symmetric channel and binary input Gaussian channel, respectively.
We also develop an extended $p$-reducible (EPR) LDPC code structure for a large class of LDPC codes with decoding complexity of $O(m_s),m_s<q$. For the non-binary LDPC codes with short length symbol-level cycles, the EPR-LDPC codes can largely avoid the corresponding bit-level cycles. Experimental studies show that the proposed EPR-LDPC codes under the hybrid parallel decoder have a maximum $0.8$dB performance gain compared to the optimized non-binary cycle LDPC codes~\cite{Huang09Cycle,Poulliat08BinImage,Voicila2008SplitCode} with a much lower decoding complexity.

Contributions of this paper are summarized as follows.
\begin{enumerate}
  \item We propose an extended iterative hard decision decoder and a hybrid parallel decoder for different channel models. A simple code optimization algorithm for these binary decoders is also provided to guarantee enhanced decoding performance.
  \item We propose an EPR-LDPC code structure to avoid the short length bit-level cycles. A general framework is given to model the constructions and optimizations of the EPR-LDPC codes. Significant results and conditions regarding the constructions and optimizations of the EPR-LDPC codes are also derived.
\end{enumerate}

\subsection{Organization of the paper} 
\label{sub:organization_of_the_paper}
The contents of this paper are organized as follows.
In section~\ref{sec:Preliminaries}, we introduce the binary representations of the non-binary LDPC code and give a unified framework for the extended binary representation.
In section~\ref{sec:EPR-LDPC}, we give the details about the EPR-LDPC codes.
In section~\ref{sec:Class_Faster_Decoders}, we give the proposed binary decoders and provide a simple code optimization algorithm.
Section~\ref{sec:Simulation} presents the simulation results.
\section{Binary Representations for Non-binary LDPC Codes} 
\label{sec:Preliminaries}
\subsection{Binary Images for Non-binary LDPC Codes} 
\label{sub:Bin_form}
Let $\mathbb F_q$ be the finite field of size $q=2^p$. Let $\mathbb F_q^*=\mathbb F_q\backslash \{0\}$ and $\mathbb F_q^N$ be the $q$-ary column vector space of dimension-$N$. The non-binary LDPC code $\mathcal C$ of length $N$ is the dimension $N-M$ linear subspace of $\mathbb{F}_q^N$. Let $\mathbf H=\{h_{i,j}\}_{M\times N},h_{i,j}\in \mathbb{F}_q$ be the parity check matrix. Then the non-binary LDPC code $\mathcal{C}$ is defined as the kernel of $\mathbf{H}$. If we endow $\mathbb F_q$ with a binary vector space structure, every $u\in\mathbb F_q$ can be denoted by a binary vector $\bar{\mathbf{u}}=(\bar{u}_{1},\bar{u}_{2},\ldots,\bar{u}_{p-1})^T$. As a result, each codeword $\mathbf x = (x_1,x_2,\ldots,x_N)^T\in \mathbb{F}^N_q$ in $\mathcal C$ has its \emph{binary vector representation} \[\bar{\mathbf x} = (\bar{\mathbf x}_1^T,\bar{\mathbf x}_2^T,\ldots,\bar{\mathbf x}_N^T)^T,\] i.e., \emph{binary images} of the non-binary LDPC codes.

To obtain the binary representation of $\mathbf H$, we use the \emph{companion matrix} $\mathbf A$ over $\mathbb F_q$ \cite{Lidl86FiniteFields,xiao07UniDecLdpc,Yang12Erasure}.
Then we have $\mathbb F_q \cong\{\mathbf{0},\mathbf A^i,0\leqslant i\leqslant q-2\}$.
If we replace every $h_{i,j}$ in $\mathbf H$ by its binary matrix representation $\mathbf A_{i,j}$ (also called \emph{matrix label}), we get the \emph{equivalent binary parity check matrix} \[\mathbf{\bar{H}}=\{\bar{\mathbf H}_1,\bar{\mathbf H}_2,\dots,\bar{\mathbf H}_M\}^T,\] where $\bar{\mathbf H}_i,i=1,\ldots,M$ are $Np\times p$ matrices. As a result, the \emph{equivalent binary LDPC code} $\bar{\mathcal C}$ is defined as the kernel of the matrix $\mathbf{\bar{H}}$.

With a little abuse of the notation, in the following, we denote any binary parity check matrix by $\bar{\mathbf H}$ and any non-binary parity check matrix by ${\mathbf H}$.
We define $\mathrm{diag}(\mathbf B_1,\mathbf B_2,\ldots,\mathbf B_N)$ as the matrix
\begin{equation}
  \nonumber
  \mathrm{diag}(\mathbf B_1,\mathbf B_2,\ldots,\mathbf B_N)=
  \left(\begin{array}{cccc}
  \mathbf B_1 & \mathbf 0 & \cdots & \mathbf 0\\
  \mathbf 0 & \mathbf B_2 & \cdots & \mathbf 0 \\
  \vdots & \vdots & \ddots & \vdots \\
  \mathbf 0 & \mathbf 0 & \cdots & \mathbf B_N \\
\end{array}\right),
\end{equation}
where $\mathbf B_j,j=1,2,\ldots,N$, are not necessarily to be square matrices.

We also define $\otimes$ as the kronecker product of two matrices.
Let $\mathbf B=(b_{i,j})_{m\times n}$ and $\mathbf B^{\prime}=(b^{\prime}_{i,j})_{h\times g}$ be two arbitrary matrices.
The kronecker product of $\mathbf B$ and $\mathbf B^{\prime}$ is an $mh\times ng$ matrix
\begin{equation}
  \nonumber
  \mathbf B\otimes\mathbf B^{\prime}=
  \left(\begin{array}{cccc}
  b_{1,1}\mathbf B^{\prime} & b_{1,2}\mathbf B^{\prime} & \cdots & b_{1,n}\mathbf B^{\prime}\\
  b_{1,1}\mathbf B^{\prime} & b_{2,2}\mathbf B^{\prime} & \cdots & b_{2,n}\mathbf B^{\prime} \\
  \vdots & \vdots & \ddots & \vdots \\
  b_{m,1}\mathbf B^{\prime} & b_{m,2}\mathbf B^{\prime} & \cdots & b_{m,n}\mathbf B^{\prime} \\
\end{array}\right).
\end{equation}
\subsection{Extended Binary Representation for Non-Binary LDPC Codes} 
\label{sub:Extended_form}
In this subsection, we give a unified framework for the extended binary representation.
Let $\mathbb N$ be the set of natural integers including $0$ and $\mathbb N^*=\mathbb N\backslash\{0\}$.
Let $\mathbb N_{q}=\{0,1,\ldots,q-1\}$ and $\mathbb N^*_q=\mathbb N_q\backslash \{0\}$.
For an arbitrary matrix $\mathbf B$, we denote the entries of $\mathbf B$ by $\mathbf B(i,j),i,j\in \mathbb N$, where $i$ and $j$ are the row number and column number, respectively. In addition, $\mathbf B(i,0)$ represents the $i^{th}$ row vector, $\mathbf B(0,j)$ represents the $j^{th}$ column vector.
Let $\mathbf I_{p\times p}$ be the $p\times p$ identity matrix.
The extended representation is based on the equivalent binary LDPC code, which begins with a linear transformation of a binary vector $\bar{\mathbf x}_j\in\mathbb F_2^p$\cite{Savin09Erasure}.

Let $\mathbf \Phi$ be the $p\times (q-1)$ binary matrix of the following form \[\mathbf \Phi=(\mathbf{\Phi}(0,1),\mathbf{\Phi}(0,2),\ldots,\mathbf{\Phi}(0,q-1)),\] where each column vector $\mathbf{\Phi}(0,j),j=1,2,\ldots,q-1$, is the binary representation of $j\in\mathbb N_q^*$. Let $\bar{\mathbf x}_j$ be the binary vector representation of the coded symbol $x_j$ and $\mathbf v_j=\mathbf{\Phi}^T\bar{\mathbf x}_j\in\mathbb F_2^{q-1}$.
Note that $\mathbf\Phi$ is the parity check matrix of the $[q-1,q-1-p]$ hamming code. So, each $\mathbf v_j$ is also a codeword of the simplex code (dual code of the hamming code).
The \emph{extended binary representation} of $\mathbf x$ is then \[\mathbf v=(\mathbf v_1^T,\ldots,\mathbf v_N^T)^T.\] In addition, for each non-zero $\mathbf A_{i,j}$, we can get a $(q-1)\times(q-1)$ matrix $\mathbf\Omega_{i,j}$ by utilizing an endomorphism of $\mathbb N_q$ and an isomorphism between $\mathbb N_q$ and $\mathbb F_2^p$\cite{Savin09Erasure}. If we replace the non-zero $\mathbf A_{i,j}$ in $\mathbf{\bar{H}}$ by $\mathbf\Omega_{i,j}$ and the zero $\mathbf A_{i,j}$ by $\mathbf 0_{(q-1)\times(q-1)}$, we get the extended binary parity check matrix $\mathbf \Omega=(\mathbf\Omega_{i,j})_{M\times N}$. Then $\mathbf\Omega \mathbf v = \mathbf 0$ and the simplex constraints on $\mathbf v$ together form the extended binary representation.

In section~\ref{sec:EPR-LDPC}, we will introduce a matrix map $f_\omega$ for constructing and optimizing the parity check matrix of the EPR-LDPC codes. We also notice that the matrix map can be utilized to simplify the constructions of $\mathbf\Omega$. As a result, $\mathbf\Omega_{i,j}=f_{\omega}(\mathbf\Phi,\mathbf A_{i,j})$. Details about $f_\omega$ is given in section~\ref{sec:EPR-LDPC}. Here, we only present the unified framework for the extended binary representation.

\section{Extended $p$-reducible Code} 
\label{sec:EPR-LDPC}
In this section, we give the general framework to model the constructions and optimizations of the EPR-LDPC codes and show how to design EPR-LDPC codes by satisfying some girth constraints.
\subsection{Definition of the EPR-LDPC Codes} 
\label{sub:definition_of_the_epr_ldpc_codes}
We first give the definition of $p$-reducible codes.
\begin{definition}[$p$-reducible]\label{def:p_reducible}
  Given a binary parity check matrix $\bar{\mathbf H}$ and an integer $p>1$. The binary code defined with $\bar{\mathbf H}$ is called \emph{$p$-reducible} iff, after rearranging the columns and rows of $\bar{\mathbf H}$, $\bar{\mathbf H}$ can be expressed as $(\mathbf A_{i,j})_{M\times N}$.
  Each $\mathbf A_{i,j}$ is either $p\times p$ zero matrix or $p\times p$ full-rank matrix.
  This binary code defined with $\bar{\mathbf H}$ is called \emph{strictly $p$-reducible} iff there does not exist an integer $p^\prime\neq p$ such that the binary code defined with $\bar{\mathbf H}$ is also $p^\prime$-reducible.
\end{definition}

From Definition~\ref{def:p_reducible}, we know that the equivalent binary LDPC codes are $\log_2 q$-reducible.
The length-12 (3,4)-regular Gallager code in\cite{JohnsonLDPCintro} is a strictly 3-reducible code.
For the quasi-cyclic binary LDPC codes, we notice that a circulant is full-rank iff its associated polynomial\cite{London1956RankCurculant} and $1-x^n$ has only one common zero.
Then if the $(Np,(N-M)p)$ binary quasi-cyclic LDPC codes are constructed with $p\times p$ full rank circulant matrices,
these codes are $p$-reducible.
In\cite{Lin2004Error}, the authors show that the parity check matrices of irregular repeat accumulate (IRA) codes can be constructed by an array of some circulant permutation matrices.
The resulting parity check matrices are composed of identity matrices and the circulant permutation matrices.
Let the size of the circulant permutation matrices be $p\times p$.
Then these IRA codes are $p$-reducible.
Moreover, if the protograph LDPC codes are obtained by filling the base matrices with $p\times p$ full rank matrices and zero matrices, the resulting codes are also $p$-reducible codes.
The above examples are only a small list of the $p$-reducible codes.

Next, we give the definitions that will be used in the following sections.
\begin{definition}
  The \emph{mother matrix} $\mathbf\Lambda_p$ of a binary matrix $\bar{\mathbf H}$ or of a non-binary matrix $\mathbf H$ is defined as a matrix with each entry being either $0$ or $1$.
  The binary matrix $\bar{\mathbf H}$ can be obtained by replacing the 0s by $\mathbf 0$ matrices of size $p\times p$ and the 1s by non-zero matrices of size $p\times p$.
  These $p\times p$ matrices are also referred to as the \emph{matrix labels}.
  The non-binary matrix $\mathbf H$ can be obtained by replacing the 0s in $\mathbf\Lambda_p$ by the zero element in $\mathbb F_{2^p}$ and the 1s by the non-zero elements in $\mathbb F_{2^p}$.
  Cycles in $\mathbf\Lambda_p$ or $\mathbf H$ are referred to as the \emph{symbol-level} cycles.
  Cycles in $\bar{\mathbf H}$ are referred to as the \emph{bit-level} cycles.
\end{definition}

\begin{definition}\label{def:pre_EPR}
	Let $\bar{\mathbf H}=(\mathbf A_{i,j})_{M\times N}=(\bar{\mathbf H}^c_j)_{1\times N}$ be a binary parity check matrix.
	Let $\mathbf\Psi_j,j\in\{1,\ldots,N\}$ be a $p\times p$ full-rank matrix. Let $\hat{\mathbf\Psi}^e_j=\{\mathbf\Psi^{e^1}_j,\mathbf\Psi^{e^2}_j,\ldots,\mathbf\Psi^{e^{M(q-1)}}_j\},j\in\{1,\ldots,N\}$ be a matrices set, where $\mathbf\Psi^{e^{i_q}}_j,i_q=1,2,\ldots,M(q-1)$ are matrices with $p$ rows and $p_j^\prime\leqslant q-1$ columns.
	Let $f_\omega$ be a matrix map and $f_\omega(\mathbf\Psi^{e^{i_q}}_j,\mathbf\Psi_j)$ be a matrix with $p_j^\prime$ columns. If each non-zero $\mathbf A_{i,j}$ is a full rank matrix, we define $f_e$ as a function that takes $\bar{\mathbf H}$ and $\hat{\mathbf\Psi}^{e}_j,j=1,\ldots,N$ as inputs and outputs $\mathbf\Omega^e=(\mathbf\Omega^e_{i,j})_{M\times N}$, where $\mathbf\Omega^e_{i,j}=\sum_{i_q=(i-1)(q-1)+1}^{i(q-1)} f_{\omega}(\mathbf\Psi^{e^{i_q}}_j,\mathbf A_{i,j})$.
\end{definition}

Below, before the detailed constructions, we give the definition of the EPR-LDPC codes to provide some general ideas.
\begin{definition}[EPR-LDPC]\label{def:EPR}
  Let $\mathbf\Psi^e=\{\mathbf\Psi^e_j,j=1,2,\ldots,N\}$ be the \emph{extended generator matrices set}.
  Each $\mathbf\Psi^e_j$ is a full-rank binary matrix with $p$ rows and $p_j^\prime\leqslant q-1$ columns and the non-zero columns in each $\mathbf\Psi^e_j$ are different from each other.
  Let \[\mathbf v^e = \mathrm{diag}(\mathbf\Psi^{eT}_1,\mathbf\Psi^{eT}_2,\ldots,\mathbf\Psi^{eT}_N)\cdot\bar{\mathbf x},\]
  where $\mathbf v^e =(\mathbf v^{eT}_1,\mathbf v^{eT}_2,\ldots,\mathbf v^{eT}_N)^T$
  and $\bar{\mathbf x}$ is the binary codeword of $\bar{\mathbf H}$.
  We associate each $\bar{\mathbf H}^c_j$ with a matrix set $\hat{\mathbf\Psi}^e_j$.
  Then, the EPR-LDPC code is defined as the kernel of the parity check matrix
  \[\mathbf\Omega^e=f_e(\bar{\mathbf H},\hat{\mathbf\Psi}^{e}_1,\hat{\mathbf\Psi}^{e}_2,\ldots,\hat{\mathbf\Psi}^{e}_N),\]
  such that $\mathbf\Omega^e\cdot\mathbf v^e=\mathbf 0$ and the matrices in $\hat{\mathbf\Psi}^e_j$ has the same column number as $\mathbf\Psi^e_j$.
\end{definition}

The above definition is very broad.
It does not only defines more generalized binary representations for the non-binary LDPC codes, but also defines the generalized representations for a large class of binary LDPC codes.
All the $p$-reducible codes can be connected to the EPR-LDPC codes.
\subsection{Mapping definition and Examples} 
\label{sub:mapping_definition_and_examples}
In this subsection, we give the details about the matrix map $f_\omega$.
We begin with the basic notations and definitions.
Let $\mathrm{wt}(\cdot)$ be the function that calculates the number of non-zero columns in a matrix or of the non-zero elements in a vector.
\begin{definition}[$\preccurlyeq$]\label{def:Preccurlyeq}
    We denote the relationship between two vectors $\mathbf a,\mathbf b$ by $\mathbf a\preccurlyeq \mathbf b$ if $\mathbf a$ is obtained by replacing some elements in $\mathbf b$ by zeros.
    For two matrices $\mathbf A,\mathbf B$, we denote $\mathbf A\preccurlyeq \mathbf B$ if $\mathbf A$ is obtained by replacing some column vectors in $\mathbf B$ by zero vectors.
\end{definition}
\begin{definition}[$\prec$]\label{def:Prec}
    We denote the relationship between two vectors $\mathbf a,\mathbf b$ by $\mathbf a\prec \mathbf b$ if $\mathbf a\preccurlyeq \mathbf b$ and $\mathrm{wt}(\mathbf a)<\mathrm{wt}(\mathbf b)$.
    For two matrices $\mathbf A,\mathbf B$, we denote the relationship between them by $\mathbf A\prec \mathbf B$ if $\mathbf A\preccurlyeq \mathbf B$ and $\mathrm{wt}(\mathbf A)<\mathrm{wt}(\mathbf B)$.
\end{definition}

Note that $\mathbf\Psi^e_j$ has different non-zero vectors as its columns and $\mathbf\Phi$ has all the non-zero vectors in $\mathbb F_2^p$ as its columns. The non-zero column vectors in each $\mathbf\Psi^e_j$ form a subset of the column vectors in $\mathbf\Phi$. Without loss of generality, we assume that $\mathbf\Psi^e_j\preccurlyeq\mathbf\Phi$ for all $j\in\{1,2,\ldots,N$\}. Since the zero columns in $\mathbf\Psi^e_j$ will result in zero bits in $\mathbf v^e_j$ which can be ignored or readily removed, this assumption does not violate the Definition~\ref{def:pre_EPR}-\ref{def:EPR} and will facilitate the discussion of EPR-LDPC code too.
With a little abuse of notation, we use $f_{\omega}(\mathbf\Psi^{e^{i_q}}_j,\mathbf\Psi_j)$ to denote the resulting binary matrix
and $f_{\omega,j}(i^\prime,j^\prime),i^\prime,j^\prime\in\{1,2,\ldots,q-1\}$ to denote the entries in $f_{\omega}(\mathbf\Psi^{e^{i_q}}_j,\mathbf\Psi_j)$. Then
\begin{equation} \nonumber
  f_{\omega,j}(i^\prime,j^\prime)=
  \left\{
  \begin{aligned}
    1, \text{     if } \mathbf\Psi^{e^{i_q}}_j(0,j^\prime)+\mathbf\Psi_j^T\mathbf\Phi(0,i^\prime)=\mathbf 0, \\
    0, \text{     if } \mathbf\Psi^{e^{i_q}}_j(0,j^\prime)+\mathbf\Psi_j^T\mathbf\Phi(0,i^\prime)\neq\mathbf 0.
  \end{aligned}
  \right.
\end{equation}
If we replace $\mathbf\Psi_j$ by the matrix label $\mathbf A_{i,j}$, then different columns of $f_{\omega}(\mathbf\Psi^{e^{i_q}}_j,\mathbf A_{i,j})$ associate with different bits in $\mathbf v^e_j$. Different rows of $f_{\omega}(\mathbf\Psi^{e^{i_q}}_j,\mathbf A_{i,j})$ represent different additions between the bits in $\mathbf v^e_j$.
To have a better understanding, we give simple examples for $f_\omega$.

\begin{figure}[!hbtp]
	\begin{center}
	\includegraphics[width=.5\textwidth]{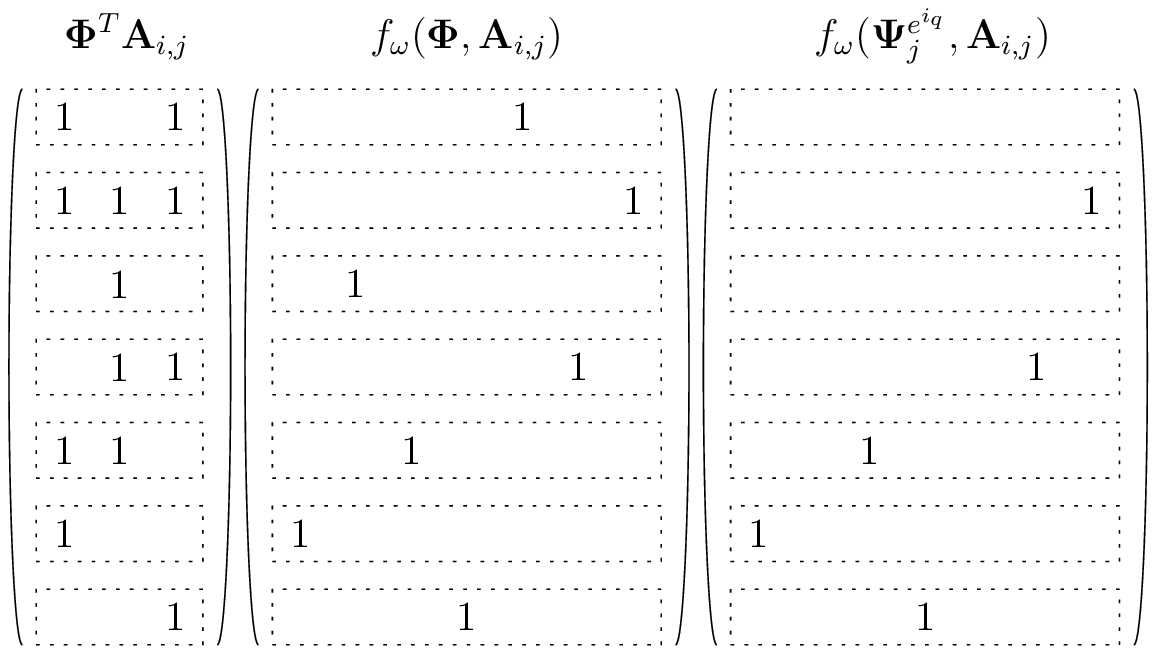}
	\end{center}
	\vspace*{-.3cm}
	\caption{Different matrices generated by $f_\omega$ in Example~\ref{exam:fw_Omega}.}
	\label{fig:exm_1}\vspace*{-.3cm}
\end{figure}

\begin{example}\label{exam:fw_Omega}
	Since the extended generator matrix $\mathbf\Psi^e_j$ has different non-zero columns, the corresponding bits in $\mathbf v^e_j$ will represent different combinations of the bits in $\bar{\mathbf x}_j$.
	Moreover, the additions between different binary parity check equations within $\bar{\mathbf H}_i^T\bar{\mathbf x}=\mathbf 0,i\in\{1,2,\ldots,M\}$ can be formulated as $\mathbf \Phi^T \bar{\mathbf H}_i^T\bar{\mathbf x}=\mathbf 0$ which will result in $q-1$ different binary parity check equations\cite{Savin09Erasure,Savin2012Fourier}.
	We divide the $q-1$ binary parity check equations into $N$ partitions with the $j^{th}$ partition consisting of
	different combinations of the bits in $\bar{\mathbf x}_j$, i.e., $\mathbf \Phi^T {\mathbf A}_{i,j}\bar{\mathbf x}_j$.
	If we set some of the $q-1$ equations to be zero equations,
	then there exist only one $\mathbf\Psi^{e^{i_q}}_j$ for the $j^{th}$ partition such that the $q-1$ rows of $f_{\omega}(\mathbf\Psi^{e^{i_q}}_j,\mathbf A_{i,j})$ respectively represents the $q-1$ rows within the $j^{th}$ partition,
	e.g., if $p=3$ and $\mathbf A_{i,j}=(\mathbf\Phi(0,3),\mathbf\Phi(0,6),\mathbf\Phi(0,7))$, then $f_{\omega}(\mathbf\Phi,\mathbf A_{i,j})$=$\mathbf\Omega_{i,j}$.
	If we set the first and third rows in $\mathbf\Omega_{i,j}$ to be zero vectors, then we have
	\begin{eqnarray}\nonumber
		\mathbf\Psi^{e^{i_q}}_j&=&(\mathbf\Phi(0,1),\mathbf 0,\mathbf\Phi(0,3),\mathbf\Phi(0,4),\mathbf 0,\\\nonumber
		&&\mathbf\Phi(0,6),\mathbf\Phi(0,7),\mathbf\Phi(0,8))\prec\mathbf\Phi,\\\nonumber
		f_{\omega}(\mathbf\Psi^{e^{i_q}}_j,\mathbf A_{i,j})&=&(\mathbf 0,\mathbf\Omega_{i,j}(2,0)^T,\mathbf 0,\mathbf\Omega_{i,j}(4,0)^T,\mathbf\Omega_{i,j}(5,0)^T,\\\nonumber
		&&\mathbf\Omega_{i,j}(6,0)^T,\mathbf\Omega_{i,j}(7,0)^T)^T\prec\mathbf\Omega_{i,j}.
	\end{eqnarray}
	More details are displayed in Fig~\ref{fig:exm_1}.
\end{example}

Note that each $\mathbf v^e_j$ is a codeword generated by $\mathbf\Psi^e_j$ and a combination of the bits in $\bar{\mathbf x}_j$ can be represented as a non-zero entry in $f_{\omega}(\mathbf\Psi^{e^{i_q}}_j,\mathbf A_{i,j})$.
$f_\omega$ can be also used to represent some parity check relationships for the bits in one $\mathbf v^e_j$.
The construction of such matrices is trivial, so we leave it for briefness.
Moreover, different additions of the rows of $\bar{\mathbf H}$ and the combinations of the parity check relationships for each $\mathbf v^e_j$ can all be represented by $f_\omega$.
\subsection{Main Properties of the Mapping} 
\label{sub:main_properties_of_the_mapping}
\begin{lemma}\label{lemma:Property_fw}
  Let $\mathbf B\preccurlyeq\mathbf\Phi$ and $\mathbf B^\prime\preccurlyeq\mathbf\Phi$ be two $p\times (q-1)$ binary matrices.
  Let $\mathbf\Psi_j$ be a $p\times p$ full-rank binary matrix.
  $f_\omega(\mathbf\Phi,\mathbf\Psi_j)$ is a $(q-1)\times (q-1)$ permutation matrix.
  In addition, $\mathbf B^\prime\preccurlyeq\mathbf B$ and $f_\omega(\mathbf B^\prime,\mathbf\Psi_j)\preccurlyeq f_\omega(\mathbf B,\mathbf\Psi_j)$ are necessary and sufficient conditions for each other.
\end{lemma}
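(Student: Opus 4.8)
The plan is to exploit the column-wise structure of $f_\omega$ together with the invertibility of the full-rank matrix $\mathbf\Psi_j$. First I would rewrite the defining condition of the map: over $\mathbb F_2$ the entry $f_{\omega,j}(i',j')$ equals $1$ precisely when $\mathbf B(0,j')=\mathbf\Psi_j^T\mathbf\Phi(0,i')$ (since $\mathbf a+\mathbf b=\mathbf 0$ is the same as $\mathbf a=\mathbf b$ in characteristic two). The central observation is that this condition couples column $j'$ of the output only to column $j'$ of the input $\mathbf B$, so $f_\omega(\cdot,\mathbf\Psi_j)$ acts on the columns of its first argument independently, one at a time.

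To prove the first assertion I would specialize to $\mathbf B=\mathbf\Phi$. Because $\mathbf\Psi_j$ is full rank, $\mathbf\Psi_j^T$ is an invertible $\mathbb F_2$-linear map, so $i'\mapsto\mathbf\Psi_j^T\mathbf\Phi(0,i')$ permutes the nonzero vectors of $\mathbb F_2^p$; and since the columns of $\mathbf\Phi$ list every nonzero vector of $\mathbb F_2^p$ exactly once, each value $\mathbf\Psi_j^T\mathbf\Phi(0,i')$ matches exactly one column $\mathbf\Phi(0,j')$. Hence every row and every column of $f_\omega(\mathbf\Phi,\mathbf\Psi_j)$ carries exactly one $1$, i.e.\ it is a $(q-1)\times(q-1)$ permutation matrix; call the induced permutation $\sigma$.

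For the equivalence I would isolate the effect of a single column of $\mathbf B\preccurlyeq\mathbf\Phi$. If $\mathbf B(0,j')=\mathbf 0$, then since $\mathbf\Psi_j^T$ is invertible and every $\mathbf\Phi(0,i')$ is nonzero, we have $\mathbf\Psi_j^T\mathbf\Phi(0,i')\neq\mathbf 0$ for all $i'$, so column $j'$ of $f_\omega(\mathbf B,\mathbf\Psi_j)$ is the zero column. If instead $\mathbf B(0,j')=\mathbf\Phi(0,j')$, which is the only other possibility under $\mathbf B\preccurlyeq\mathbf\Phi$, then column $j'$ coincides with column $j'$ of the permutation matrix $f_\omega(\mathbf\Phi,\mathbf\Psi_j)$. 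Consequently $f_\omega(\mathbf B,\mathbf\Psi_j)$ is obtained from the fixed matrix $f_\omega(\mathbf\Phi,\mathbf\Psi_j)$ by zeroing exactly those columns $j'$ with $\mathbf B(0,j')=\mathbf 0$; in short, the column support of $f_\omega(\mathbf B,\mathbf\Psi_j)$ equals, index for index, the column support of $\mathbf B$.

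Finally I would translate both $\preccurlyeq$ relations into this common language. Because $\mathbf B,\mathbf B'\preccurlyeq\mathbf\Phi$, Definition~\ref{def:Preccurlyeq} makes $\mathbf B'\preccurlyeq\mathbf B$ equivalent to the support inclusion $\mathbf B'(0,j')=\mathbf 0$ whenever $\mathbf B(0,j')=\mathbf 0$. By the previous paragraph the identical inclusion governs the column supports of $f_\omega(\mathbf B',\mathbf\Psi_j)$ and $f_\omega(\mathbf B,\mathbf\Psi_j)$, and since the nonzero columns of both are inherited from the same permutation matrix $f_\omega(\mathbf\Phi,\mathbf\Psi_j)$, this inclusion is exactly $f_\omega(\mathbf B',\mathbf\Psi_j)\preccurlyeq f_\omega(\mathbf B,\mathbf\Psi_j)$; reading the chain of equivalences in both directions yields the two implications. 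I do not expect a genuine obstacle here: the one point requiring care is the bookkeeping of $\preccurlyeq$, namely that it is characterized by support inclusion rather than by arbitrary column changes, which is precisely what validates the support-matching argument.
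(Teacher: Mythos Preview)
Your proof is correct and follows essentially the same approach as the paper: both arguments use the invertibility of $\mathbf\Psi_j^T$ to establish that $f_\omega(\mathbf\Phi,\mathbf\Psi_j)$ is a permutation matrix, and then track how zeroing columns of the first argument propagates to the output. Your formulation is in fact slightly more direct than the paper's, since you observe that a zero column $j'$ of $\mathbf B$ produces a zero \emph{column} $j'$ of $f_\omega(\mathbf B,\mathbf\Psi_j)$ immediately, whereas the paper first passes through the corresponding zero row and then invokes the permutation structure to return to columns; but this is a cosmetic difference, not a different route.
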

\begin{proof}
  Since $\mathbf\Psi_j$ is a $p\times p$ full rank matrix, all the $\mathbf\Psi^T_j\mathbf\Phi(0,i^\prime),i^\prime=1,2,\ldots,q-1$ are different column vectors.
  Then $f_\omega(\mathbf\Phi,\mathbf\Psi_j)$ will have only one non-zero entry in each row or column.
  So, $f_\omega(\mathbf\Phi,\mathbf\Psi_j)$ is a $(q-1)\times (q-1)$ permutation matrix.
  If $\mathbf B\preccurlyeq\mathbf\Phi$, the zero columns in $\mathbf B$ will result in zero rows in $f_\omega(\mathbf B,\mathbf\Psi_j)$.
  Then $f_\omega(\mathbf B,\mathbf\Psi_j)$ can be obtained by setting some rows of $f_\omega(\mathbf\Phi,\mathbf\Psi_j)$ to be zero vectors.
  Since $f_\omega(\mathbf\Phi,\mathbf\Psi_j)$ have only one non-zero entry in each column, then some columns become zero vectors in $f_\omega(\mathbf B,\mathbf\Psi_j)$.
  As a result $f_\omega(\mathbf B,\mathbf\Psi_j)\preccurlyeq f_\omega(\mathbf\Phi,\mathbf\Psi_j)$.
  Similarly, we have $f_\omega(\mathbf B^\prime,\mathbf\Psi_j)\preccurlyeq f_\omega(\mathbf B,\mathbf\Psi_j)$ if $\mathbf B^\prime\preccurlyeq\mathbf B$.
  Conversely, if $f_\omega(\mathbf B,\mathbf\Psi_j)\preccurlyeq f_\omega(\mathbf\Phi,\mathbf\Psi_j)$, it means that
  the columns in $\mathbf B$ generating the zero rows in $f_\omega(\mathbf B,\mathbf\Psi_j)$ are set to be zero vectors.
  Since there is a one-to-one correspondence between $\mathbf B$ and $f_\omega(\mathbf B,\mathbf\Psi_j)$, $\mathbf B\preccurlyeq\mathbf\Phi$.
  Similarly, we have $\mathbf B^\prime\preccurlyeq\mathbf B$ if $f_\omega(\mathbf B^\prime,\mathbf\Psi_j)\preccurlyeq f_\omega(\mathbf B,\mathbf\Psi_j)$. This completes the proof.
\end{proof}

In the following, if each $\mathbf A_{i,j}$ in $\bar{\mathbf H}$ is replaced by $f_{\omega}(\mathbf\Phi,\mathbf A_{i,j})$, we denote the resulting matrix by
\begin{eqnarray}\nonumber
	\mathbf\Omega&=&(\mathbf\Omega_{i,j})_{M\times N}\\\nonumber
	&=&(\mathbf\Omega_1,\mathbf\Omega_2,\ldots,\mathbf\Omega_M)^T=(\mathbf \Omega_1^c,\mathbf \Omega_2^c,\ldots,\mathbf \Omega_N^c),
\end{eqnarray}
where $\mathbf \Omega_i$ is the $(q-1)N\times (q-1)$ sub-matrix and $\mathbf\Omega_j^c$ is the $(q-1)M\times (q-1)$ sub-matrix of $\mathbf\Omega$.
When $\mathbf A_{i,j}\in\mathbb F_q$ for all $i$ and $j$, $\mathbf\Omega$ coincides with the parity check matrix for the extended binary representation.
According to Lemma~\ref{lemma:Property_fw}, we also have the following properties of $\mathbf\Omega$.
\begin{lemma}\label{prop:Extended_parity_matrix}
    1) For all the non-zero $\mathbf A_{i,j},i\in\{1,2,\ldots,M\},j\in\{1,2,\ldots,N\}$, the column and row weights of the corresponding $\mathbf \Omega_{i,j}$ are all 1, \ie $\mathbf \Omega_{i,j}$ is a $(q-1)\times (q-1)$ permutation matrix.
    2) Let $\mathbf\Lambda_p(i,0),i=1,2,\ldots,M$ be the $i^{th}$ row vector of $\mathbf\Lambda_p$. Then the row weights of $\mathbf \Omega_i^T$ are the same and equal to the weight of $\mathbf\Lambda_p(i,0)$. Let $\mathbf\Lambda_p(0,j),j=1,2,\ldots,N$ be the $j^{th}$ column vector of $\mathbf\Lambda_p$. Then the column weights of $\mathbf\Omega_j^c$
    are equal to the weight of $\mathbf\Lambda_p(0,j)$.
    Degree distributions of $\mathbf\Omega$ are the same as those of $\mathbf\Lambda_p$.
\end{lemma}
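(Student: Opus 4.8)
The plan is to deduce both parts almost entirely from Lemma~\ref{lemma:Property_fw} together with a straightforward block-counting argument, exploiting the fact that every block $\mathbf\Omega_{i,j}$ is obtained as $f_\omega(\mathbf\Phi,\mathbf A_{i,j})$.

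For Part 1), I would invoke Lemma~\ref{lemma:Property_fw} directly. Because the $p$-reducible / EPR structure forces every non-zero matrix label $\mathbf A_{i,j}$ to be a $p\times p$ full-rank matrix, the lemma gives that $f_\omega(\mathbf\Phi,\mathbf A_{i,j})$ is a $(q-1)\times(q-1)$ permutation matrix. A permutation matrix has exactly one nonzero entry in each row and in each column, so $\mathbf\Omega_{i,j}$ has all row and column weights equal to $1$, which is precisely the assertion. I would also record the complementary fact, read off from the defining formula for $f_{\omega,j}$, that a zero label produces the all-zero block $\mathbf 0_{(q-1)\times(q-1)}$: since no column of $\mathbf\Phi$ is the zero vector, no entry of $f_\omega(\mathbf\Phi,\mathbf 0)$ can equal $1$. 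This is what lets the zero blocks contribute nothing to the weight counts below.

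For Part 2), I would count ones along a single line of the block structure. Fix a block row $i$; a row of $\mathbf\Omega_i^T$ meets each block $\mathbf\Omega_{i,j}$, $j=1,\dots,N$, in exactly one row of that block. By Part 1) this contributes weight $1$ whenever $\mathbf A_{i,j}\neq\mathbf 0$ (a permutation block has exactly one $1$ per row, regardless of which row) and weight $0$ whenever $\mathbf A_{i,j}=\mathbf 0$. Summing over $j$ gives weight $\#\{j:\mathbf A_{i,j}\neq\mathbf 0\}=\mathrm{wt}(\mathbf\Lambda_p(i,0))$, the same value for each of the $q-1$ rows; the dual count along a fixed block column $j$ gives that every column of $\mathbf\Omega_j^c$ has weight $\mathrm{wt}(\mathbf\Lambda_p(0,j))$. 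Finally, since each symbol-level row (column) of $\mathbf\Lambda_p$ expands into exactly $q-1$ bit-level rows (columns) of $\mathbf\Omega$ of identical weight, the multiset of check/variable degrees of $\mathbf\Omega$ is the corresponding multiset of $\mathbf\Lambda_p$ replicated $q-1$ times, so the normalized degree distributions coincide.

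The argument is essentially bookkeeping, so the only real care needed is with the notation: one must keep straight the transpose convention in $\mathbf\Omega=(\mathbf\Omega_1,\dots,\mathbf\Omega_M)^T=(\mathbf\Omega_1^c,\dots,\mathbf\Omega_N^c)$, so that ``rows of $\mathbf\Omega_i^T$'' denote the rows of the $i$-th horizontal block strip and ``columns of $\mathbf\Omega_j^c$'' the columns of the $j$-th vertical block strip. The one conceptual point I would verify explicitly is the uniformity claim, namely that the per-row (resp. per-column) contribution of a non-zero block is $1$ independently of which row (resp. column) is chosen; this is exactly the permutation-matrix property from Part 1), and everything else follows by addition.
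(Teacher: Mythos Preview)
Your proposal is correct and matches the paper's approach exactly: the paper simply states that the lemma follows ``according to Lemma~\ref{lemma:Property_fw}'' without giving any further details, so your block-counting argument is precisely the intended (and only natural) way to fill this in. The one minor addition you make---checking that a zero label yields a zero block---is implicit in the paper's construction of $\mathbf\Omega$ but is a reasonable point to spell out.
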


Note that, If $\bar{\mathbf H}$ has the following form
\begin{equation}\label{eq:Proposed_parity_matrix}
      \bar{\mathbf H}=(\mathbf\Lambda_p\otimes\mathbf I_{p\times p})\cdot\mathrm{diag}(\mathbf\Psi_1,\mathbf\Psi_2,\ldots,\mathbf\Psi_N),
\end{equation}
where $\mathbf\Psi_j,j=1,2,\ldots,N$ are $p\times p$ full-rank matrices,
then the parity check matrix $\mathbf\Omega$ associated with Eq.~\eqref{eq:Proposed_parity_matrix} has the following form
\begin{equation}\label{eq:EPR_parity_matrix}
  \mathbf\Omega=(\mathbf\Lambda_p\otimes\mathbf{I}_{(q-1)\times(q-1)})\cdot\mathrm{diag}(f_{\omega}(\mathbf\Phi,\mathbf\Psi_1),\ldots,f_{\omega}(\mathbf\Phi,\mathbf\Psi_N)).
\end{equation}
According to the first item in Lemma~\ref{prop:Extended_parity_matrix}, $\mathbf\Omega$ in Eq.~\eqref{eq:EPR_parity_matrix} is composed of $q-1$ disjoint $\mathbf\Lambda_p$s. As a result, codes defined with these $\mathbf\Omega$s including the extended binary representation will cause performance loss.
Actually, if each $\mathbf\Psi_j$ is the matrix representation of a non-binary symbol in $\mathbb F_q$, Eq.~\eqref{eq:Proposed_parity_matrix} defines the equivalent binary parity check matrix of the non-binary column-scaled LDPC (CS-LDPC) code\cite{Zhao2013CsLDPC} (including the non-binary QC-LDPC codes and the finite geometry non-binary LDPC codes).

\subsection{General Framework for Constructing and Optimizing the EPR-LDPC Codes} 
\label{sub:general_framework_for_constructing_and_optimizing_the_epr_ldpc_codes}
In this subsection, we present a general framework for the exhaustive search of $\mathbf\Omega^e$.
Since the non-zero bits in $\mathbf v^e_j$ represent different combinations of the bits in $\bar{\mathbf x}_j$,
the parity check relationships for $\mathbf v^e$ can be obtained by finding the parity check relationships for the corresponding combinations
and the desired $\mathbf\Omega^e$ can be constructed by searching among different combinations of the parity check relationships for $\mathbf v^e$.

Definition~\ref{def:EPR} may imply that we should search for $\mathbf\Omega^e$ based on a given $\mathbf\Psi^e$.
However, in order to guarantee enhanced decoding performance for $\mathbf\Omega^e$, we first determine the desired $\mathbf\Omega^e$ then we update $\mathbf\Psi^e$.
That is,
\begin{enumerate}
    \item Based on $\bar{\mathbf H}$ and $\mathbf\Phi$, we find and store some of the parity check relationships for $\mathbf v$.
    \item Using these parity check relationships, we construct different $\mathbf\Omega^e$s row by row such that the new row does not introduce cycles smaller than certain integer.
    \item We find the $\mathbf\Omega^e$ with the desired performance threshold among these $\mathbf\Omega^e$s.
    Then, we update $\mathbf\Psi^e$ and $\mathbf v^e$.
\end{enumerate}

By utilizing $f_\omega$, we can model the searching processes (Step~2 and Step~3) as choosing proper $\hat{\mathbf\Psi}^e_j$ for all $j\in\{1,2,\ldots,N$\}.
Then $\mathbf\Omega^e$ is obtained by replacing each $\mathbf A_{i,j}$ in $\bar{\mathbf H}$ with $\sum_{i_q=(i-1)(q-1)+1}^{i(q-1)} f_{\omega}(\mathbf\Psi^{e^{i_q}}_j,\mathbf A_{i,j})$ and each $f_{\omega}(\mathbf\Psi^{e^{i_q}}_j,\mathbf A_{i,j})$ corresponds to a row in $\mathbf\Omega^e_{i,j}$ (some of $\mathbf\Psi^{e^{i_q}}_j$s could be zero matrices).

Note that $\mathbf A_{i,j}$s are not necessarily to be the matrix representations of the non-binary symbols of $\mathbb F_q$.
Moreover, we refer to $\mathbf\Psi^e_j(0,2^{i-1})\neq \mathbf 0,\forall i\in\{1,2,\ldots,p\}$ as the \emph{trivial} case for the EPR-LDPC code.
If each $\mathbf A_{i,j}$ is replaced by $\sum_{i_q=(i-1)(q-1)+1}^{i(q-1)} f_{\omega}(\mathbf\Psi^{e^{i_q}}_j,\mathbf A_{i,j})=f_{\omega}(\mathbf\Phi,\mathbf A_{i,j})$, the resulting $\mathbf\Omega^e$ coincides with the matrix $\mathbf\Omega$.
If $\mathbf\Psi^e_j\preccurlyeq\mathbf\Phi,j=1,2,\ldots,N$ and $\mathbf\Omega^e=\mathbf\Omega$, the resulting codes is a class of punctured EPR-LDPC codes for parity check matrix $\mathbf\Omega$ (the decoding messages for the punctured bits are set to be 0s in the soft decision decoders\cite{Jeongseok2004Punctured}).
It is easy to verify that codes in \cite{SV11ExtendedErasure} form a trivial case of this punctured EPR-LDPC code for non-binary LDPC code.

\subsection{Bit-level Cycles in $\mathbf\Omega$} 
\label{sub:bit_level_cycles_in_Omega}
In the previous subsection, the matrix map $f_\omega$ is introduced to give the definition of the EPR-LDPC code and formulate the exhaustive searching of $\mathbf\Omega^e$.
In this subsection, we investigate the relations between the symbol-level cycles in $\mathbf\Lambda_p$ and the bit-level cycles in $\mathbf\Omega$ based on the properties of $f_\omega$.
In general, we assume that $\mathbf\Lambda_p$ is of girth $g_h$. $\mathbf\Lambda_p$ is cycle-free if $g_h=0$.
Before the detailed demonstrations, we first give the definition for the matrix cycle.
\begin{definition}[matrix cycle]
  Given a binary parity check matrix $\bar{\mathbf H}$. Let $\mathbf\Lambda_p$ be its mother matrix.
  A \emph{matrix cycle} of length-$g$ in $\bar{\mathbf H}$ exists iff its corresponding positions in $\mathbf\Lambda_p$ form a symbol-level cycle of length-$g$.
\end{definition}

\begin{lemma}
  If the girth of the mother matrix $\mathbf\Lambda_p$ is $g_h>0$, then the girth of the associated parity check matrix $\mathbf\Omega$ for the $p$-reducible code is $g_s\geqslant g_h$, which is caused by the length-$g_s$ symbol-level cycle in $\mathbf \Lambda_p$. If $g_h=0$, $g_s=0$.
\end{lemma}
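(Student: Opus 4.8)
The plan is to exploit the block structure of $\mathbf\Omega$. By the first item of Lemma~\ref{prop:Extended_parity_matrix}, every nonzero block $\mathbf\Omega_{i,j}=f_\omega(\mathbf\Phi,\mathbf A_{i,j})$ is a $(q-1)\times(q-1)$ permutation matrix, while the zero blocks are zero. Thus the Tanner graph of $\mathbf\Omega$ is obtained from that of $\mathbf\Lambda_p$ by ``blowing up'' each column/row symbol into $q-1$ bits and replacing each edge of $\mathbf\Lambda_p$ by a perfect matching between the two bit bundles. I would introduce the projection $\phi$ sending each bit (column) of $\mathbf\Omega$ to the block-column index it belongs to and each check (row) to its block-row index, and then track an arbitrary bit-level cycle through $\phi$.

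\emph{Step 1 (projection).} Take any bit-level cycle $v_1\!-\!c_1\!-\!v_2\!-\!c_2\!-\!\cdots\!-\!v_k\!-\!c_k\!-\!v_1$ of length $2k$ in $\mathbf\Omega$, with $v_t$ in block-column $J_t$ and $c_t$ in block-row $I_t$. Each edge $v_t\!\sim\! c_t$ (resp.\ $c_t\!\sim\! v_{t+1}$) lies in a nonzero block, so $\mathbf\Lambda_p(I_t,J_t)=1$ (resp.\ $\mathbf\Lambda_p(I_t,J_{t+1})=1$). Applying $\phi$ yields a closed walk $J_1\to I_1\to J_2\to I_2\to\cdots\to I_k\to J_1$ of length $2k$ in the bipartite graph of $\mathbf\Lambda_p$, all of whose traversed edges are genuine; i.e., the bit-level cycle runs over a sequence of matrix cycles.

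\emph{Step 2 (no backtracking --- the crux).} The central point, which I expect to be the main obstacle, is to show the projected walk never immediately reverses, i.e., $J_t\neq J_{t+1}$ and $I_t\neq I_{t+1}$. Suppose $J_t=J_{t+1}=j$. Then $v_t\neq v_{t+1}$ both lie in block-column $j$ and are both adjacent to $c_t$ in block-row $I_t$; but inside the single block $\mathbf\Omega_{I_t,j}$, which is a permutation matrix, the row carrying $c_t$ has exactly one nonzero entry, so $c_t$ has a unique neighbour in that block-column --- a contradiction. The relation $I_t\neq I_{t+1}$ follows symmetrically by looking at the unique nonzero entry in the column carrying $v_{t+1}$. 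This rigidity is precisely what would fail for a generic $p$-reducible code with merely full-rank blocks: it is the permutation property guaranteed by Lemma~\ref{lemma:Property_fw} that forbids turning around inside one block.

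\emph{Step 3 (conclusion).} A non-backtracking closed walk of positive length in any graph must contain a simple cycle, of length at most the walk length and at least the girth. Hence, if $g_h>0$, the projection of any bit-level cycle of length $2k$ contains a symbol-level cycle of length in $[g_h,2k]$, so $2k\geqslant g_h$; minimizing over all bit-level cycles gives $g_s\geqslant g_h$, and the girth-achieving bit-level cycle sits over a symbol-level cycle of $\mathbf\Lambda_p$, which is the asserted ``cause''. If instead $g_h=0$, the bipartite graph of $\mathbf\Lambda_p$ is a forest, which admits no non-backtracking closed walk of positive length; by Step~2 this forces $\mathbf\Omega$ to be cycle-free, i.e., $g_s=0$. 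The one delicate point remaining is to pin the length of the underlying symbol-level cycle down to $g_s$ exactly rather than merely $\leqslant g_s$: this holds whenever the shortest bit-level cycle projects to a \emph{simple} walk (equivalently, when the monodromy permutation accumulated around the underlying matrix cycle fixes a bit), and I would either verify this for the girth-achieving cycle or state the cause as a symbol-level cycle of length $\ell$ with $g_h\leqslant\ell\leqslant g_s$.
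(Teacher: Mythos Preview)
Your argument is correct and rests on the same key fact the paper uses --- that each nonzero block $\mathbf\Omega_{i,j}$ is a permutation matrix (Lemma~\ref{prop:Extended_parity_matrix}) --- but you deploy it more carefully. The paper's proof argues loosely in the forward direction (``a symbol-level cycle causes a matrix cycle, which contains only bit-level cycles of the same length, and not every matrix cycle causes one''), without spelling out why a bit-level cycle cannot be shorter than every symbol-level cycle; your projection-plus-no-backtracking argument (Step~2) makes that inference explicit and is the cleaner route.

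The subtlety you flag at the end is genuine and is not resolved by the paper's proof either. The clause ``caused by the length-$g_s$ symbol-level cycle'' need not hold literally: take $\mathbf\Lambda_p$ to be the $2\times 2$ all-ones matrix (so every symbol-level cycle has length $4$) and choose the four permutation blocks so that the monodromy around the unique matrix cycle has no fixed point; then $g_s>4$ while no length-$g_s$ symbol-level cycle exists. Your proposed qualification --- that the girth-achieving bit-level cycle projects onto a symbol-level cycle of some length $\ell$ with $g_h\leqslant\ell\leqslant g_s$, with equality $\ell=g_s$ exactly when that projection is simple --- is the accurate statement.
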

\begin{proof}
  Since $\mathbf\Omega_{i,j}$ is a $(q-1)\times (q-1)$ permutation matrix and cycle-free (due to the first item in Lemma~\ref{prop:Extended_parity_matrix}),
  if $\mathbf\Lambda_p$ contains no cycle, $\mathbf\Omega$ has no bit-level cycle.
  Moreover, a cycle in $\mathbf\Lambda_p$ can only cause matrix cycle in $\mathbf\Omega$ with the same length. The matrix cycle only contains bit-level cycles with the same length.
  And because $\mathbf\Omega_{i,j}$ is not equal to $\mathbf I_{(q-1)\times (q-1)}$,
  a matrix cycle can not always cause the bit-level cycles with the same length.
  Thus, the girth of the binary parity check matrix $\mathbf\Omega$ is not less than the girth of its mother matrix $\mathbf\Lambda_p$.
\end{proof}

For the non-binary parity check $\mathbf H$,
if $\mathbf H$ satisfies the cycle-free condition,
its associated $\mathbf\Omega$ will also satisfy the cycle-free condition.
A bit-level cycle in $\mathbf\Omega$ is caused by the symbol-level cycle of the same length in $\mathbf H$.
Moreover, when $\mathbf H$ is constructed with cycles, investigations indicate that, among the cycles, the length-4 cycles contribute the most to the performance degradation.
Next, we show that a length-4 symbol-level cycle in $\mathbf H$ will not always result in length-4 bit-level cycles in $\mathbf\Omega$.

\begin{theorem}\label{thm:L4_prb}
  Let the non-zero matrix labels be uniformly taken from $\mathbb F^*_q$.
  And the probability that a length-4 symbol-level cycle in the non-binary parity check matrix $\mathbf H$ will result in length-4 bit-level cycles in $\mathbf\Omega$ based on $f_\omega$ is denoted by $p_4$.
  Then \[
    p_4= \frac{1}{q-1}
  \] for $q=2^p\geqslant 4$.
\end{theorem}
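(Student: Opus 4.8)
The plan is to reduce the combinatorial question about bit-level cycles to a single algebraic equation on the four field entries and then count. A length-$4$ symbol-level cycle lives on two rows $i_1,i_2$ and two columns $j_1,j_2$ of $\mathbf H$, with nonzero entries $\alpha=h_{i_1,j_1}$, $\beta=h_{i_1,j_2}$, $\gamma=h_{i_2,j_1}$, $\delta=h_{i_2,j_2}$ drawn independently and uniformly from $\mathbb F_q^*$. By Lemma~\ref{prop:Extended_parity_matrix} each of the four blocks $\mathbf\Omega_{i,j}$ is a $(q-1)\times(q-1)$ permutation matrix, so within any single block every row and every column carries exactly one $1$. Consequently a length-$4$ bit-level cycle can use neither two bits of the same $\mathbf v_j$ nor two checks from the same block-row; it must take one bit from $\mathbf v_{j_1}$, one from $\mathbf v_{j_2}$, one check from block-row $i_1$ and one from block-row $i_2$, so it necessarily sits inside the matrix cycle determined by $\alpha,\beta,\gamma,\delta$.

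First I would write each block $\mathbf\Omega_{i,j}$ as a permutation $\pi_{i,j}$ of the index set $\{1,\dots,q-1\}$ and trace a candidate $4$-cycle: bit $a\in\mathbf v_{j_1}$, check in row $i_1$, bit $b\in\mathbf v_{j_2}$, check in row $i_2$, back to $a$. Chasing the four incidences shows that such a cycle exists for a given $a$ iff $a$ is a fixed point of the composite permutation $\pi_{i_2,j_1}^{-1}\pi_{i_2,j_2}\pi_{i_1,j_2}^{-1}\pi_{i_1,j_1}$.

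The crucial structural step is to identify how these permutations compose. From the defining rule of $f_\omega$, the unique $1$ of $\mathbf\Omega_{i,j}$ in row $i'$ sits in the column $j'$ with $\mathbf\Phi(0,j')=\mathbf A_{i,j}^T\,\mathbf\Phi(0,i')$. Because the transposed companion matrices $\{\mathbf A^T\}$ again form a field isomorphic to $\mathbb F_q$, the assignment $h\mapsto\pi(h)$ is an injective group homomorphism from $\mathbb F_q^*$ into the symmetric group, and for $h\neq 1$ the matrix $\mathbf A^T-\mathbf I$ is a nonzero element of that field and hence an invertible linear map, so $\pi(h)$ is fixed-point-free. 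Thus the composite permutation equals $\pi(\alpha\delta(\beta\gamma)^{-1})$ and admits a fixed point iff $\alpha\delta(\beta\gamma)^{-1}=1$, i.e.\ iff the $2\times2$ minor condition $\alpha\delta=\beta\gamma$ holds; in that case the composite is the identity and \emph{all} $q-1$ choices of $a$ close into cycles, an all-or-nothing dichotomy consistent with the plural ``cycles'' in the statement. I expect this identification --- that composition of the $\mathbf\Omega$-blocks is governed by field multiplication and that only the identity element has fixed points --- to be the main obstacle, since the transpose and the precise index/field correspondence must be handled with care.

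Finally, the probability is a one-line count. The event is exactly $\alpha\delta=\beta\gamma$ with $\alpha,\beta,\gamma,\delta$ independent and uniform on the order-$(q-1)$ group $\mathbb F_q^*$. Conditioning on any three of them, say $\beta,\gamma,\delta$, the equation forces $\alpha=\beta\gamma\delta^{-1}$, a single determined value among the $q-1$ equally likely elements. Therefore $p_4=\frac{1}{q-1}$ for every $q=2^p\ge 4$, as claimed.
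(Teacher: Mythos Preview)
Your proof is correct, and it reaches the same endpoint as the paper through a somewhat different and more explicit route.

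The paper's argument is set-theoretic: it encodes each block-row pair $(\mathbf\Omega_{i_k,j_1},\mathbf\Omega_{i_k,j_2})$ as a set $\mathscr S_k\subset\{1,\dots,q-1\}^2$ of $(q-1)$ column-index pairs, notes that a bit-level $4$-cycle exists iff $\mathscr S_1\cap\mathscr S_2\neq\emptyset$, then asserts that as the labels vary over $\mathbb F_q^*$ these sets form exactly $q-1$ disjoint classes partitioning $\{1,\dots,q-1\}^2$, whence $p_4=1-\frac{q-2}{q-1}$. The partition claim is stated rather than proved; it is in fact equivalent to the two facts you isolate --- that $\mathscr S_k$ depends only on the ratio of the two labels (group-homomorphism property) and that distinct ratios give disjoint graphs (fixed-point-freeness of $\pi(h)$ for $h\neq1$).

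Your approach makes both of these ingredients explicit and collapses them into the single algebraic criterion $\alpha\delta=\beta\gamma$, which the paper never writes down. This buys you two things: first, a cleaner count (condition on three labels, the fourth is forced); second, a criterion that is immediately recognizable as the classical full-rank-deficiency condition for non-binary $4$-cycles, connecting the result to the existing literature on label selection. The paper's partition picture, on the other hand, is what generalizes most directly to Corollary~\ref{corol:L4_prob}, where the labels need not form a group and one only has $P\leqslant q-1$ disjoint classes. Your minor caveat about the orientation of the composite permutation is harmless, since fixed-point sets are preserved under inversion and $\mathbb F_q^*$ is abelian.
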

\begin{proof}
  Since the length-4 bit-level cycles are only caused by the length-4 symbol level cycle, we only consider the bit-level cycles within a symbol-level cycle. Let $(i_1,j_1)$, $(i_1,j_2)$, $(i_2,j_1)$, $(i_2,j_2)$ be the four coordinates of four entries that represent a length-4 symbol level cycle  in $\mathbf H$.
  Let
  \begin{equation}
  \nonumber
  \left(\begin{array}{cccc}
  \mathbf\Omega_{i_1,j_1} & \mathbf\Omega_{i_1,j_2}\\
  \mathbf\Omega_{i_2,j_1} & \mathbf\Omega_{i_2,j_2} \\
  \end{array}\right)
  \end{equation}
  be the matrix cycle corresponding to a length-4 symbol-level cycle.
  Let $\alpha_1,\beta_1,\alpha_2,\beta_2\in\{1,2,\ldots,q-1\}$ respectively represent the column numbers of non-zero entries in $\mathbf\Omega_{i_1,j_1}$, $\mathbf\Omega_{i_1,j_2}$, $\mathbf\Omega_{i_2,j_1}$ and $\mathbf\Omega_{i_2,j_2}$ with $\alpha_1, \beta_1$ in the same row and $\alpha_2, \beta_2$ in the same row.
  Let
    $\mathscr S_1=\{(\alpha_1,\beta_1),\alpha_1,\beta_1\in\{1,2,\ldots,q-1\}\}$ and $
        \mathscr S_2=\{(\alpha_2,\beta_2),\alpha_2,\beta_2\in\{1,2,\ldots,q-1\}\}$
  be the two-tuple sets containing all the different rows in $(\mathbf\Omega_{i_1,j_1},\mathbf\Omega_{i_1,j_2})$
  and $(\mathbf\Omega_{i_2,j_1},\mathbf\Omega_{i_2,j_2})$, respectively.
  Then, $|\mathscr S_1|=|\mathscr S_2|=q-1$.
  Let $\mathscr S$ be the set containing all the rows that could be involved in the length-4 matrix cycles.
  Let $\mathscr S=\{(\alpha,\beta),\alpha,\beta=1,2,\ldots,q-1\}$ and $|\mathscr S|=(q-1)^2$
  with $\mathscr S_1,\mathscr S_2\subset \mathscr S$.
  The length-4 bit-level cycle exist iff
  $\mathbf{Pr(\mathscr S_1\cap\mathscr S_2\neq \emptyset)}=1-\mathbf{Pr(\mathscr S_1\cap\mathscr S_2=\emptyset)}.$
  We can calculate the probability of $\mathscr S_1\cap\mathscr S_2=\emptyset$ by counting the number of choices of $\mathscr S_1$ and $\mathscr S_2$ over $\mathscr S$.
  Since there are $q-1$ different non-zero $\mathbf\Omega_{i,j}$s, different $\mathbf\Omega_{i,j}$s have different row numbers of the same row-vectors and no two different $\mathscr S_i$s have common elements, different $\mathscr S_i$s divide $\mathscr S$ into $q-1$ disjoint subsets.
  And because each $\mathscr S_i$ is uniformly chosen, then for a $\mathscr S_1$, there exist $(q-2)$ $\mathscr S_2$s that do not form cycles.
  As a result, $\mathbf{Pr(\mathscr S_1\cap\mathscr S_2=\emptyset)}=\frac{(q-1)(q-2)}{(q-1)^2}.$
\end{proof}

From Theorem~\ref{thm:L4_prb}, we know that, as the field size increases, the probability that a length-4 symbol-level cycle in $\mathbf H$ results in length-4 bit-level cycles in $\mathbf\Omega$ will be significantly reduced.
\begin{corollary}\label{corol:L4_prob}
  For any $p$-reducible code, let the matrix labels be chosen uniformly over a set $\{\mathbf B_g,g=1,2,\ldots,Q\}$. If there exist an integer $P\leqslant Q$ such that $\mathrm{rank}(f_\omega(\mathbf\Phi,\mathbf B_{g_i})+f_\omega(\mathbf\Phi,\mathbf B_{g_j}))=q-1$ for all $i\neq j, i,j\in\{1,2,\ldots,P\}$, then the probability that a length-4 symbol-level cycle in $\mathbf\Lambda_p$ will result in length-4 bit-level cycles in $\mathbf\Omega$, i.e., $p_4^\prime$, satisfies
\begin{equation}\label{eq:L4_p_reducible}
  \frac{1}{q-1}\leqslant p_4^\prime\leqslant \frac{1+(Q-P)^2}{P+(Q-P)^2}
\end{equation}
and $P\leqslant q-1$ for $q=2^p\geqslant 4$. When $P=1$, $p_4^\prime=1$.
\end{corollary}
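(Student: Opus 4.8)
The plan is to recycle the counting machinery of Theorem~\ref{thm:L4_prb}, isolate exactly where the field hypothesis was used, and replace that one ingredient by the rank hypothesis. First I would record the combinatorial core of a length-4 bit-level cycle. By the first item of Lemma~\ref{prop:Extended_parity_matrix}, each nonzero block $f_\omega(\mathbf\Phi,\mathbf B_g)$ is a $(q-1)\times(q-1)$ permutation matrix; write $\pi_g$ for the associated permutation. For a matrix cycle whose corner labels are $\mathbf B_a,\mathbf B_b$ on top and $\mathbf B_c,\mathbf B_d$ on the bottom, a short computation with the four permutation blocks shows that a length-4 bit-level cycle exists iff the two column relations $\mu_1=\pi_c^{-1}\pi_a$ and $\mu_2=\pi_d^{-1}\pi_b$ agree at some argument, equivalently iff the row-tuple sets $\mathscr S_1,\mathscr S_2$ (the graphs of $\pi_b\pi_a^{-1}$ and $\pi_d\pi_c^{-1}$) intersect. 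This is precisely the event $\mathscr S_1\cap\mathscr S_2\neq\emptyset$ used in Theorem~\ref{thm:L4_prb}.

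Next I would translate the hypothesis $\mathrm{rank}(f_\omega(\mathbf\Phi,\mathbf B_{g_i})+f_\omega(\mathbf\Phi,\mathbf B_{g_j}))=q-1$. Over $\mathbb F_2$ the sum of two permutation matrices has an all-zero row at every coordinate where $\pi_{g_i}$ and $\pi_{g_j}$ coincide, so full rank forces $\pi_{g_i}(r)\neq\pi_{g_j}(r)$ for all $r$; that is, the $P$ good labels yield \emph{pairwise discordant} permutations. This is exactly the property that distinct field multiplications enjoyed automatically in Theorem~\ref{thm:L4_prb}, where two distinct nonzero scalars never act identically on any symbol. Consequently, whenever a column of the matrix cycle carries two distinct good labels, that column's relation is fixed-point-free and cannot close a cycle, and when both columns carry good labels the cycle forms only if the two relations coincide, reproducing the one-out-of-$P$ behaviour that gave $p_4=\tfrac{1}{q-1}$ at $P=q-1$. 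Comparing the permutations at a single coordinate also bounds any pairwise-discordant family by $q-1$ values, which is the claimed $P\leqslant q-1$; and if $P=1$ no discordance is available, so every configuration may close a cycle, giving $p_4'=1$.

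Finally I would split the $Q$ labels into the $P$ discordant ones and the $k=Q-P$ remaining ones and estimate the probability over the uniformly chosen corner labels. The good--good configurations contribute, after quotienting by the sharply transitive action of the discordant family, a single cyclic outcome among $P$, exactly as in Theorem~\ref{thm:L4_prb}; the $k$ uncontrolled labels contribute the $(Q-P)^2$ ordered bad--bad configurations of the two columns, which in the most adversarial placement all close a cycle. Collecting favourable over total as $1+(Q-P)^2$ over $P+(Q-P)^2$ yields the upper bound of Eq.~\eqref{eq:L4_p_reducible}, while the most benign placement (the bad labels themselves discordant, forcing the fully structured case $P=Q=q-1$ of Theorem~\ref{thm:L4_prb}) drives $p_4'$ down to its minimum $\tfrac{1}{q-1}$, giving the lower bound.

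The hard part will be that, lacking the multiplicative group that $\mathbb F_q$ supplied in Theorem~\ref{thm:L4_prb}, the composed column relations $\mu_1,\mu_2$ need not form a sharply transitive set, so pairwise discordance of the single labels does not by itself govern when $\mu_1$ and $\mu_2$ agree. Making the good--good quotient rigorous (justifying the one-in-$P$ count) and correctly charging the mixed good--bad configurations in the worst case, rather than double counting them against the bad--bad term, is where the delicate bookkeeping lies, and is what ultimately fixes the exact numerator and denominator in Eq.~\eqref{eq:L4_p_reducible}.
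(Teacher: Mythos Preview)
Your approach is essentially the paper's: reduce to the disjoint-subset counting of Theorem~\ref{thm:L4_prb}, translate the rank hypothesis into pairwise discordance of the permutation blocks, and bound $p_4'$ by splitting the $Q$ labels into the $P$ good ones and the $Q-P$ uncontrolled ones. The paper's own proof is only two sentences (one obtaining $P\leqslant q-1$ from the disjoint subsets of $\mathscr S$, one deferring the inequality in Eq.~\eqref{eq:L4_p_reducible} entirely to ``different values of $Q-P$'' handled as in Theorem~\ref{thm:L4_prb}), so your plan is already more explicit than the original, and the subtlety you flag about the composite relations $\mu_1,\mu_2$ is not addressed there either.
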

\begin{proof}
  The $P$ matrix labels result in at most $q-1$ disjoint subsets of $\mathscr S$ then $P\leqslant q-1$. The proof for the above inequality which results from the different values of $Q-P$ is similar to the proof of Theorem~\ref{thm:L4_prb}.
\end{proof}

According to Corollary~\ref{corol:L4_prob}, $p_4^\prime$ can be minimized by enlarging $q$ and minimizing $Q-P$.
Consider a short length matrix cycle of length-$g_c,g_c>4$. Based on the proof of Theorem~\ref{thm:L4_prb}, we suppose that the probability of the existence of the corresponding bit-level cycles is small too and relates to both $q$ and $g_c$.
We also have the following observation for the short length cycles with lengths larger than 4.

\begin{observation}\label{obs:prob_larger_bit_cycles}
\mbox{}
	\begin{enumerate}
		\item For a $p$-reducible code in Corollary~\ref{corol:L4_prob}, the probability that a symbol-level cycle of length-$g_c$ in $\mathbf H$ will cause corresponding bit-level cycles in $\mathbf \Omega$ is bigger than $\frac{1}{q-1}$.
		\item This probability increases as the length of the symbol-level cycle increases and decreases as $q=2^p$ increases.
	\end{enumerate}
\end{observation}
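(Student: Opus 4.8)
The plan is to reduce the existence of a length-$g_c$ bit-level cycle sitting inside a fixed length-$g_c$ matrix cycle to a single linear-algebraic condition, and then to read both items off the behaviour of that condition for field-structured versus restricted labels. Write the matrix cycle as the closed walk through checks $i_1,\dots,i_\ell$ and variables $j_1,\dots,j_\ell$ (so $g_c=2\ell$), with blocks $\mathbf A_{i_k,j_k}$ and $\mathbf A_{i_k,j_{k+1}}$ incident to check $i_k$. By Lemma~\ref{lemma:Property_fw} and item~1 of Lemma~\ref{prop:Extended_parity_matrix} each non-zero $\mathbf\Omega_{i,j}=f_\omega(\mathbf\Phi,\mathbf A_{i,j})$ is a $(q-1)\times(q-1)$ permutation, and from the definition of $f_\omega$ this permutation is exactly the action of the invertible linear map $\mathbf A_{i,j}^{T}$ on the non-zero vectors of $\mathbb F_2^{p}$. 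Tracing a candidate bit-level cycle forces, at each check, that the bits chosen in the two incident variables lie in the same row; composing these matching conditions around the walk shows that a bit-level cycle exists \emph{iff} the product matrix $\mathbf M=\prod_k\mathbf A_{i_k,j_k}^{\pm T}\in GL_p(\mathbb F_2)$ (an alternating product of the labels and their inverses) fixes a non-zero vector, i.e. iff $\det(\mathbf M-\mathbf I)=0$ over $\mathbb F_2$. First I would check that for $\ell=2$ this criterion is identical to the condition $\mathscr S_1\cap\mathscr S_2\neq\emptyset$ of the proof of Theorem~\ref{thm:L4_prb}, which fixes notation and confirms the reduction.

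With this reformulation the baseline and item~1 are immediate. When the labels are field elements (the setting of Theorem~\ref{thm:L4_prb}), $\mathbf M$ is again the matrix of a field element, and multiplication by a non-zero field element fixes a non-zero vector only when that element is $1$; hence $\ker(\mathbf M-\mathbf I)$ is either trivial or all of $\mathbb F_2^p$, the number of induced bit-level cycles is $0$ or $q-1$, and the event reduces to $\mathbf M=\mathbf I$, of probability $\tfrac1{q-1}$ for every $\ell$. For a $p$-reducible code with labels drawn from the restricted set of Corollary~\ref{corol:L4_prob}, $\mathbf M$ is instead a generic element of $GL_p(\mathbb F_2)$, which may have a proper non-zero fixed subspace $\,1\le\dim\ker(\mathbf M-\mathbf I)<p\,$ without being the identity. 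These eigenvalue-$1$ events strictly contain the identity event, so the probability that the matrix cycle induces a bit-level cycle is strictly larger than $\tfrac1{q-1}$ once $g_c>4$, giving item~1.

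For item~2 the plan is to track the distribution of the number of fixed bits $\#\{v\neq 0:\mathbf M v=v\}=2^{\dim\ker(\mathbf M-\mathbf I)}-1$, not just its mean. Averaging over uniform labels keeps the expected number essentially constant (each position is fixed with probability $\approx\tfrac1{q-1}$, so the mean is $\approx 1$) independently of $g_c$; what changes is the spread. Short cycles keep $\mathbf M$ close to the rigid field case, where the fixed bits are concentrated in the single rare event $\mathbf M=\mathbf I$ (large variance, small probability of \emph{some} fixed bit). As $g_c$ grows, $\mathbf M$ is a product of more restricted-label factors and equidistributes toward a generic element of $GL_p(\mathbb F_2)$, whose fixed subspace is non-trivial with a positive, essentially $q$-independent probability; the variance drops and the probability of at least one fixed bit increases monotonically toward the generic $GL_p(\mathbb F_2)$ value. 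For the dependence on $q=2^p$, at a fixed short length $\mathbf M$ stays structured, and enlarging $p$ shrinks both the baseline $\tfrac1{2^p-1}$ and the chance that such a structured map accidentally fixes a non-zero vector in the larger space $\mathbb F_2^p$, so the probability decreases with $q$.

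The hard part will be making the length-monotonicity fully rigorous: it rests on an equidistribution/mixing claim, namely that composing successive restricted-label maps pushes $\mathbf M$ toward the uniform distribution on the subgroup of $GL_p(\mathbb F_2)$ it generates while $\Pr[\det(\mathbf M-\mathbf I)=0]$ increases along the way. A clean route would be a coupling or a recursive comparison between the length-$g_c$ and length-$(g_c+2)$ products, or a direct argument that $\mathrm{Var}(\dim\ker(\mathbf M-\mathbf I))$ decreases as factors are added; but since the generated subgroup, and hence the monotonicity, depends on the specific label set $\{\mathbf B_g\}$ and can degenerate in special cases, a universal strict inequality is not available, which is precisely why the claim is stated as an Observation rather than a theorem.
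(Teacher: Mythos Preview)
The paper does not prove Observation~\ref{obs:prob_larger_bit_cycles} at all. It is stated immediately after Corollary~\ref{corol:L4_prob} with no accompanying argument; the preceding sentence merely says ``we \emph{suppose} that the probability of the existence of the corresponding bit-level cycles is small too and relates to both $q$ and $g_c$,'' which signals an empirical, heuristic claim rather than a proved result. So there is nothing in the paper for your proposal to be compared against.

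Your proposal actually goes well beyond the paper. The reduction of a length-$g_c$ bit-level cycle inside a fixed matrix cycle to the fixed-point condition on the alternating product $\mathbf M=\prod_k \mathbf A_{i_k,j_k}^{\pm T}$ is correct and is the right way to think about the problem; it specialises to the set-intersection criterion in the proof of Theorem~\ref{thm:L4_prb} when $\ell=2$. Your observation that field labels keep $\mathbf M$ inside the cyclic group $\mathbb F_q^*$, so that the probability is exactly $\tfrac{1}{q-1}$ for \emph{every} $g_c$, while generic $GL_p(\mathbb F_2)$ labels allow proper non-zero fixed subspaces and hence a strictly larger probability, is the correct structural picture behind item~1. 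Note that this already shows the strict inequality in item~1 fails in the pure field case, so the Observation should be read as a heuristic for genuinely non-field label sets, consistent with your closing caveat.

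Your treatment of item~2 is honest about its status: the length-monotonicity is really an equidistribution claim for random walks on the subgroup of $GL_p(\mathbb F_2)$ generated by the label set, and whether $\Pr[\ker(\mathbf M-\mathbf I)\neq 0]$ increases monotonically along such a walk depends on that subgroup. You correctly identify that a universal rigorous statement is unavailable, and that this is precisely why the paper labels the claim an Observation rather than a theorem. In short, your proposal is sound as a heuristic justification and is already more than the paper itself supplies.
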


\subsection{Design of EPR-LDPC Codes according to $\mathbf\Omega$} 
\label{sub:design_of_epr_ldpc_codes_according_to_omega}
In this subsection, we show how to efficiently find the parity check matrix $\mathbf\Omega^e$ with certain girth.
The resulting $\mathbf\Omega^e$ can be also seen as a significant generalization of $\mathbf\Omega$ for both binary and non-binary LDPC codes.
Compared to the $\mathbf\Omega$, $\mathbf\Omega^e$ will have less bit-level cycles.
In addition, superior to $\mathbf\Omega$ in Eq.~\eqref{eq:EPR_parity_matrix}, $\mathbf\Omega^e$ will not be composed of disjoint sub-matrices.
And for any girth-optimized $p$-reducible code, $\mathbf\Omega$ can be constructed to further improve the decoding performance.

\begin{figure}[!hbtp]
\begin{center}
\includegraphics[width=.45\textwidth]{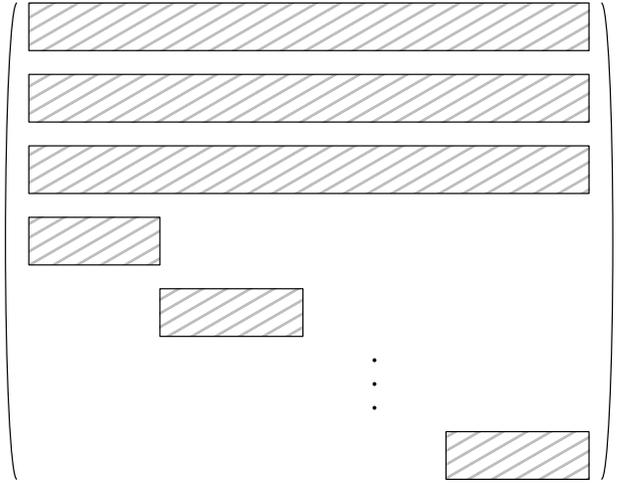}
\end{center}
\vspace*{-.3cm}
\caption{The structure of matrix $\mathbf\Omega^e$.}
\label{fig:omega_e}\vspace*{-.3cm}
\end{figure}

First, according to Observation~\ref{obs:prob_larger_bit_cycles}, if $q$ is large enough, $\mathbf\Omega$ can be constructed with the short length bit-level cycles being largely avoided in many cases. In these cases, we can always further avoid some short length bit-level cycles by simply changing the associated matrix labels $\mathbf A_{i,j}$s in $\bar{\mathbf H}$ and searching for the proper $\mathbf\Omega_{i,j}$s that do not form cycles.
However, for some $p$-reducible codes, e.g., the codes in Corollary~\ref{corol:L4_prob} with $P=1$ and the non-binary CS-LDPC codes etc., the short length symbol-level cycles will always cause corresponding short length bit-level cycles in $\mathbf\Omega$s.
In these cases, if $\mathbf\Lambda_p$ is constructed with some length-4 symbol-level cycles, we can still avoid few length-4 bit-level cycles in $\mathbf\Omega$.
That is, if the weight of one row is 3, we can add the smaller weight row to the larger weight row such that the resulting row is not of weight-1.
However, this row addition operation can only handle limited number of length-4 bit-level cycles.
As to the short-length cycles with lengths larger than 4, it is hard to avoid them without altering the structure of $\mathbf\Omega$.
In addition, the $\mathbf\Omega$ in Eq.~\eqref{eq:EPR_parity_matrix} is composed of disjoint sub-matrices which will cause performance loss.
So, we need to find another representation of $\bar{\mathbf H}$ which will generally result in good performance.
Actually, since $\mathbf\Omega$ has already avoided a large number of corresponding short length bit-level cycles in many cases and some cycles in $\mathbf\Omega$ can be carefully handled, we could find the desired representation, i.e., $\mathbf\Omega^e$, more efficiently from $\mathbf\Omega$ instead of searching among numerous parity check combinations.
We give the details below.
\begin{description}
	\item[Step~1]: Let $q=2^p$ and $p>1$. We construct a binary matrix set $\{\mathbf B^\prime_1,\mathbf B^\prime_2,\mathbf B^\prime_3,\ldots\}$ with each $\mathbf B^\prime_i$ being a cycle-free $2\times (q-1)$ or $2\times 2(q-1)$ matrix.
	In addition, $\mathbf B^\prime_i\cdot\mathbf v_j=\mathbf 0, \forall i,j$ or $(\mathbf B^\prime_i(0,1),\ldots,\mathbf B^\prime_i(0,q-1))\cdot\mathbf v_j=\mathbf 0$ and $(\mathbf B^\prime_i(0,q),\ldots,\mathbf B^\prime_i(0,2q-1))\cdot\mathbf v_j=\mathbf 0,\forall i,j$.
	\item[Step~2]: Given a parity check matrix $\bar{\mathbf{H}}$ in Definition~\ref{def:p_reducible} with mother matrix $\mathbf\Lambda_p$. We construct $\mathbf\Omega$ by using $f_\omega$. If $\mathbf\Lambda_p$ is constructed with length-4 cycles, we use the row addition operation to eliminate some length-4 bit-level cycles.
	\item[Step~3]: Let $g_s$ be an even number. For the matrix cycles with length less than $g_s$ that results in bit-level cycles in $\mathbf\Omega$, we set the rows across the matrix cycles to be zero vectors and rearrange these zero rows to the lower part of the resulting matrix. Then, as illustrated in Fig~\ref{fig:omega_e}, we place the $\mathbf B^\prime_i$s that will not cause bit-level cycles with length less than $g_s$ one by one within these zero rows (at the non-overlapped column-positions).
	The resulting matrix is denoted by $\mathbf\Omega^e$.
\end{description}

Note that, given a practical $p$-reducible LDPC code, the row addition operation in Step~2 can be omitted as the length-4 cycles in $\mathbf\Lambda_p$ are in general eliminated.
Then, we can only use the row replacing operation in Step~3 to handle the bit-level cycles.
Moreover, the codes defined with $\mathbf\Omega^e$ and $\mathbf\Omega$ tend to have larger code lengths and code spaces than their associated binary LDPC codes.
How to obtain the correct $\bar{\mathbf x}$ and avoid the undetected errors for $\mathbf v^e$ apart from enlarging the girth of $\mathbf\Omega^e$ will be addressed in the next section.


\section{Bit-Level Decoders for the EPR-LDPC codes} 
\label{sec:Class_Faster_DecodersF}
\subsection{General Sum-Product Decoding} 
\label{sub:general_sum_product_decoding}
Let $\mathcal{C}$ be the non-binary LDPC code. Let $\bar{\mathcal C}$ be the equivalent binary LDPC code.
Let $\mathcal{C}^e$ be the EPR-LDPC code.
Let $\mathcal{C}^e_j$ be the binary LDPC code generated by $\mathbf\Psi^e_j$.
Obviously, there exists the following isomorphism
\begin{equation}\label{eq:Isomorphism}
  \mathcal{C}\cong\bar{\mathcal C}\cong\mathcal{C}^e\cap(\mathcal{C}^e_1\times\mathcal{C}^e_2\times\cdots\times\mathcal{C}^e_N).
\end{equation}
The above equation implies that to obtain an enhanced decoding performance of $\bar{\mathcal C}$, we should decode $\mathbf v^e$ by utilizing the parity check relationships for $\mathcal{C}^e$ and $\mathcal{C}^e_1\times\mathcal{C}^e_2\times\cdots\times\mathcal{C}^e_N$ simultaneously.
If $\mathbf\Psi^e_j=\mathbf\Phi,j=1,2,\ldots,N$, Eq.~\ref{eq:Isomorphism} represents the isomorphism between (punctured) extended binary representation and its non-binary counterpart\cite{Savin09Erasure,SV11ExtendedErasure}.
The decoding applications of the extended binary representation over general channel models are given in\cite{Savin2012Fourier}.
For arbitrary $p$-reducible code, the above isomorphism also exists.
To obtain the best decoding results of $\bar{\mathcal C}$, each $\mathrm{wt}(\mathbf\Psi^e_j)$ should be large enough (more parity check bits will be involved).
On the other hand, large $\mathrm{wt}(\mathbf\Psi^e_j)$ will results in higher decoding complexity in general.
Thus, there exist a trade-off between the choices of $\mathbf\Psi^e_j,j=1,2,\ldots,N$ and the decoding performance.
Then to have the optimized $\mathbf\Psi^e_j$s, we have to maximize each $\mathrm{wt}(\mathbf\Psi^e_j)$ while minimize the probability of the existence of the short length bit-level cycles.
Next, we first show how to obtain $\bar{\mathcal C}$.
Then we optimize the extended generator matrices set.

Assume that $\bar{\mathbf x} = (\bar{\mathbf x}_1^T,\bar{\mathbf x}_2^T,\ldots,\bar{\mathbf x}_N^T)^T$ is transmitted over binary input channels. Let $\bar{\mathbf y} = (\bar{\mathbf y}_1^T,\bar{\mathbf y}_2^T,\ldots,\bar{\mathbf y}_N^T)^T$ be the received sequence.
The proposed decoder is a class of binary decoders which is implemented to make decisions both on $\bar{\mathbf x}$ and $\mathbf v^e$. In the following, we first give the general sum-product decoding procedure for the proposed decoders. Then we develop two variants for different channel models.

\begin{figure}[!hbtp]
\begin{center}
\includegraphics[width=.45\textwidth]{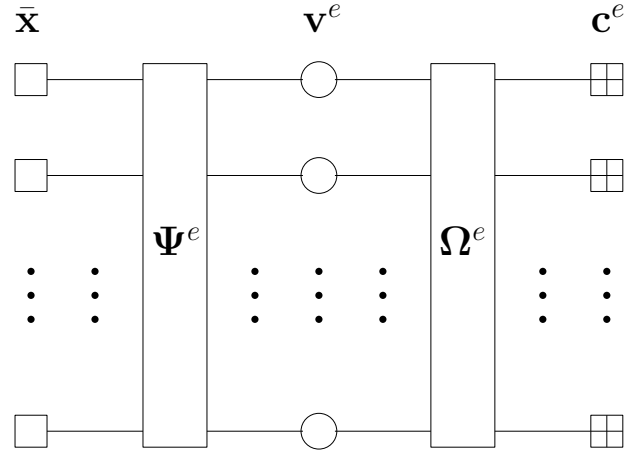}
\end{center}
\vspace*{-.3cm}
\caption{General decoding procedure for the proposed binary decoders.}
\label{fig:General_decoding}\vspace*{-.3cm}
\end{figure}

As shown in Fig~\ref{fig:General_decoding}, the bits in $\bar{\mathbf x}$ are represented as \emph{bit nodes}. The bits in $\mathbf v^e$ are represented as \emph{extended bit nodes}. Every rows of $\mathbf\Omega^e$ are represented as \emph{constraint nodes} in $\mathbf c^e$. Then the general decoding procedure is described as follows.
\begin{description}
  \item[Step~1]: Let $\mu^{(0)}_x$ be the message from channels. The message $\mu^{(0)}_v$ for $\mathbf v^e$ is calculated from $\mu^{(0)}_x$ according to $\mathbf\Psi^e$.
  \item[Step~2]: The message $\omega_c^{(l)}$ in constraint nodes is calculated from $\mu^{(l)}_v$ according to parity check matrix $\mathbf\Omega^e$ for the EPR-LDPC code.
  \item[Step~3]: The message $\mu^{(l+1)}_v$ in the extended bit node is updated from $\omega_c^{(l)}$ according to $\mathbf\Omega^e$.
  \item[Step~4]: The message $\mu^{(l+1)}_v$ is further tailored based on the generator matrices set $\mathbf\Psi^e$.
  \item[Step~5]: For iteration-$h$, if the hard decision of $\mathbf v^e$ is $\hat{\mathbf v}^e$ which satisfies $\mathbf\Omega^e\hat{\mathbf v}^e=\mathbf 0$, then we obtain $\bar{\mathbf x}$ from $\mathbf v^e$ according to $\mathbf\Psi^e$.
\end{description}

Since we use a binary decoding process and the zero columns in each $\mathbf\Psi^e_j$ and $\mathbf\Omega^e$ can be removed, the computational complexity for the check-vector-sum operation relies linearly on the number of the non-zero columns in $\mathbf\Omega^e_i,i=1,2,\ldots,M$.
The computational complexity of tailoring $\mu^{(l+1)}_v$ relies linearly on the non-zero columns in $\mathbf\Psi^e_j,j=1,2,\ldots,N$.
Let the maximum number of the non-zero columns in each $\mathbf\Omega^e_i$ be $\phi_e\leqslant q-1$ and
the maximum number of the non-zero columns in each $\mathbf\Psi^e_j$ be $\psi_e\leqslant q-1$.
Then the computational complexity is dominated by $O(m_s=\max\{\phi_e,\psi_e\})$.

In the general decoding procedure, when the decoding of $\mathbf v^e$ over $\mathbf\Omega^e$ is accomplished, we have to get every $\bar{\mathbf x}_j$ from $\mathbf v^e_j$.
To guarantee $\bar{\mathbf x}_j$ being successfully resolved from $\mathbf v^e_j$,
we also provide the following conditions for the extended generator matrices.
\begin{theorem}
\label{them:resolvable}
Consider the $p$-reducible codes.
For all $j\in\{1,2,\ldots,N\}$ and $q=2^p\geqslant 4$,
\begin{enumerate}
  \item if $\mathrm{wt}(\mathbf\Psi^e_j)>\frac{q}{2}-1$, every bits in $\bar{\mathbf x}_j$ can be resolved from $\mathbf v^e_j$.
  \item If $\mathrm{wt}(\mathbf\Psi^e_j)=\frac{q}{2}-1$, $\bar{\mathbf x}_j$ can be resolved with probability of $1-\frac{q-1}{{{q-1}\choose{\frac{q}{2}-1}}}$.
\end{enumerate}
\end{theorem}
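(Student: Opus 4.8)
The plan is to translate ``$\bar{\mathbf x}_j$ can be resolved from $\mathbf v^e_j$'' into a spanning condition in $\mathbb F_2^p$ and then reduce both claims to counting how many distinct nonzero vectors can lie inside a single proper subspace. First I would unwind the relation $\mathbf v^e_j=\mathbf\Psi^{eT}_j\bar{\mathbf x}_j$: under the standing assumption $\mathbf\Psi^e_j\preccurlyeq\mathbf\Phi$, each nonzero column of $\mathbf\Psi^e_j$ is a distinct nonzero vector of $\mathbb F_2^p$, so the corresponding coordinate of $\mathbf v^e_j$ is the inner product $\langle\mathbf\Psi^e_j(0,k),\bar{\mathbf x}_j\rangle$, a linear functional of $\bar{\mathbf x}_j$ labelled by that column. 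Hence $\bar{\mathbf x}_j$ is uniquely recoverable from $\mathbf v^e_j$ if and only if these functionals separate points, that is, if and only if the $\mathrm{wt}(\mathbf\Psi^e_j)$ nonzero columns span $\mathbb F_2^p$ (equivalently $\mathbf\Psi^e_j$ has rank $p$). This turns the theorem into a purely combinatorial statement about when a prescribed number of distinct nonzero vectors span.

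The key lemma I would then establish is that a set of distinct nonzero vectors fails to span $\mathbb F_2^p$ exactly when it is contained in some proper subspace, and that a subspace of dimension $d$ holds precisely $2^d-1$ nonzero vectors. Consequently any proper subspace able to accommodate $\frac{q}{2}-1=2^{p-1}-1$ distinct nonzero vectors must satisfy $2^d-1\geqslant 2^{p-1}-1$, forcing $d=p-1$; that is, the only ``bad'' subspaces are hyperplanes, and a hyperplane contains exactly $\frac{q}{2}-1$ nonzero vectors. Item~1 is then immediate: if $\mathrm{wt}(\mathbf\Psi^e_j)>\frac{q}{2}-1$ the nonzero columns are too many to fit in any hyperplane, so they span $\mathbb F_2^p$ and $\bar{\mathbf x}_j$ is resolved.

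For Item~2 the weight equals $\frac{q}{2}-1$, so the columns fail to span if and only if they are exactly the full set of nonzero vectors of some hyperplane. I would count hyperplanes as the distinct kernels of nonzero linear functionals; over $\mathbb F_2$ distinct nonzero functionals have distinct kernels, so there are $q-1$ of them, and each contributes exactly one bad column set. Dividing by the $\binom{q-1}{\frac{q}{2}-1}$ equally likely choices of nonzero columns gives a failure probability of $\frac{q-1}{\binom{q-1}{\frac{q}{2}-1}}$, and subtracting from $1$ yields the claimed success probability.

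The main obstacle is the combinatorial lemma rather than any computation: one must rule out lower-dimensional subspaces (via the dimension count above) and verify that hyperplanes number exactly $q-1$. I would also be careful to state explicitly that the probability in Item~2 is taken over the uniform choice of which $\frac{q}{2}-1$ distinct nonzero vectors serve as the columns of $\mathbf\Psi^e_j$, so that $\binom{q-1}{\frac{q}{2}-1}$ is the correct sample space.
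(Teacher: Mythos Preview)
Your proposal is correct and follows essentially the same argument as the paper: reduce resolvability to $\mathrm{rank}(\mathbf\Psi^e_j)=p$, observe that failure forces the nonzero columns into a proper subspace whose maximal nonzero-vector count is $2^{p-1}-1$, and then count hyperplanes to get the probability in Item~2. The only cosmetic difference is that the paper counts the $(p-1)$-dimensional subspaces via the Gaussian binomial coefficient ${p\choose p-1}_2=q-1$, whereas you count them as kernels of the $q-1$ nonzero linear functionals; both give the same number, and your explicit mention of the uniform model on column sets is a clarification the paper leaves implicit.
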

\begin{proof}
  Recall that $\mathbf\Phi$ is a $p\times(q-1)$ matrix and \[V=\{\mathbf 0, \mathbf\Phi(0,1),\mathbf\Phi(0,2),\ldots,\mathbf\Phi(0,q-1)\}\] is a vector space of dimension-$p$. Let \[V^e_j=\{\mathbf 0, \mathbf\Psi^e_j(0,1),\mathbf\Psi^e_j(0,2),\ldots,\mathbf\Psi^e_j(0,q-1)\}\] be the set that formed by the column vectors of $\mathbf\Psi^e_j$. Then $\mathrm{wt}(\mathbf\Psi^e_j)=|V^e_j|-1.$
  Let the set $V^\prime=\{\mathbf\Phi(0,1),\mathbf\Phi(0,2),\ldots,\mathbf\Phi(0,2^{p-1})\}$ be the set of all unit vectors. Then the non-zero vectors in $V$ and $V^e_j$ can be formulated by the additions of the vectors in $V^\prime$.

  If $|V^e_j|$ is larger than the size of the $(p-1)$-dimensional subspace of $V$, then $\mathrm{rank}(\mathbf\Psi^e_j)=p$. Every bits in $\bar{\mathbf x}_j$ can be resolved.
  The size of the $(p-1)$-dimensional subspace can be calculated by
  $\sum_{i=1}^{p-1}{{p-1}\choose{i}}+1=2^{p-1}$.
  Then if $\mathrm{wt}(\mathbf\Psi^e_j)>\sum_{i=1}^{p-1}{{p-1}\choose{i}}=2^{p-1}-1$, $\bar{\mathbf x}_j$ can be resolved from $\mathbf v^e_j$.
  If $\mathrm{wt}(\mathbf\Psi^e_j)=\sum_{i=1}^{p-1}{{p-1}\choose{i}}$, the rank of $\mathbf\Psi^e_j$ is either $p$ or $p-1$. Then the probability that $\bar{\mathbf x}_j$ can be resolved equals the probability that the non-zero vectors in $\mathbf\Psi^e_j$ do not form a $(p-1)$-dimensional subspace, which depends on the number of the $(p-1)$-dimensional subspaces.
  To calculate the number of the $(p-1)$-dimensional subspaces of the $V$, we first introduce the Gaussian binomial coefficient over finite field $\mathbb F_q$
  \[
    {n\choose{k}}_q=\frac{[n]_q!}{[k]_q![n-k]_q!},k\leqslant n,
  \]
  where $[n]_q!=[1]_q[2]_q\cdots[n]_q$ with
  \begin{eqnarray}\nonumber
  	[m]_q&=&\frac{1-q^m}{1-q}\\\nonumber
  	&=&\sum_{0\leqslant i<m} q^i=1+q+q^2+\cdots q^{m-1},1\leqslant m\leqslant n.
  \end{eqnarray}
  Then the number of the the $(p-1)$-dimensional subspaces over $\mathbb F_2$ is calculated by
    ${p\choose{p-1}}_2 = \sum_{i=1}^p2^{i-1}$.
  The probability that $\bar{\mathbf x}_j$ can be resolved when $\mathrm{wt}(\mathbf\Psi^e_j)=\sum_{i=1}^{p-1}{{p-1}\choose{i}}$ is \[1-{p\choose{p-1}}_2\big{/}{{q-1}\choose{\mathrm{wt}(\mathbf\Psi^e_j)}}.\]
\end{proof}

Note that $2^{p-1}=\sum_{i=1}^{p-1}{{p-1}\choose{i}}+1\geqslant p,\forall p\geqslant 2$.
In addition, if $\bar{\mathbf x}_j$ can be resolved, $\mathrm{wt}(\mathbf\Psi^e_j)$ is at least the size of a basis of a dimension-$p$ vector space over $\mathbb F_q$, i.e., $\mathrm{wt}(\mathbf\Psi^e_j)\geqslant \log_2q,j=1,2,\ldots,N$.
Thus, the least number of non-zero columns required for each $\mathbf\Psi^e_j$ is $p$.
However, if $\mathrm{wt}(\mathbf\Psi^e_j)=p,\forall j$, the desired parity check matrix $\mathbf\Omega^e$ may not exist. Theorem~\ref{them:resolvable} provides sufficient conditions for the successful decoding of $\bar{\mathbf x}$. Moreover, large $\mathrm{wt}(\mathbf\Psi^e_j)$ generally results in better decoding performance. For the punctured EPR-LDPC code decoded over parity check matrix $\mathbf\Omega$, if all the punctured bits are recovered we can use $\mathbf\Phi$ instead of $\mathbf\Psi^e_j$ to resolve $\bar{\mathbf x}_j$.

\subsection{Decoding over Different Channels} 
\label{sub:decoding_over_different_channels}
Next, we apply the general decoding procedure to the binary symmetric channel (BSC).
\begin{example}\label{exam:Hard}
  In this example, we present an extended iterative hard decision decoder for BSC.
  Let $\boxplus$ be the bit-wise addition of the vector space over $\mathbb F_2$.
  Then, for simplex code\cite{MacWilliams1978FEC}, we have $\mathbf v_j(j^\prime_1)+\mathbf v_j(j^\prime_2)+\cdots +\mathbf v_j(j^\prime_k)=\mathbf v_j(j^\prime_1\boxplus j^\prime_2\boxplus\cdots \boxplus j^\prime_k),j^\prime_i\in\{1,2,\ldots,q-1\}$\cite{Savin09Erasure}.
  As a result, the iterative decoding procedure is described as follows.
  \begin{description}
    \item[Step~1]: Let $\hat{\mathbf v}^e$ be the message for the extended bit nodes which is initialized by the value of $\mathbf\Psi^{eT}_j\bar{\mathbf y}_j,j=1,2,\ldots,N$ and $b$ be the thresholds to perform the bit-flipping.
    \item[Step~2]: If $\mathbf z=\mathbf\Omega^e\hat{\mathbf v}^e=\mathbf 0$ then $\mathbf v^e=\hat{\mathbf v}^e$.
    Else, $\mathbf s=\mathbf z^T\mathbf\Omega^e=(\mathbf s_j)_{1\times N}$ (here is the decimal multiplication).
    For all $\mathbf s_j(j^\prime)\geqslant b,j^\prime\in\{1,2,\ldots,q-1\}$, if there exist $\mathbf v^e_j(j^\prime_1)+\mathbf v^e_j(j^\prime_2)+\cdots +\mathbf v^e_j(j^\prime_k)\neq\mathbf v^e_j(j^\prime)$
    where $j^\prime_i\in\{1,2,\ldots,q-1\}$ such that $\mathbf\Psi^e_j(0,j^\prime_i)\neq \mathbf 0$ and $j^\prime=j^\prime_1\boxplus j^\prime_2\boxplus\cdots \boxplus j^\prime_k$,
    then $\hat{\mathbf v}^e_j(j^\prime)=1+\hat{\mathbf v}^e_j(j^\prime)$.
    \item[Step~3]: Stop the procedure when $\mathbf\Omega^e\hat{\mathbf v}^e=\mathbf 0$ or the maximum iteration number is reached. Then for the trivial case, $\bar{\mathbf x}_j=({\mathbf v}_j^e(1),{\mathbf v}_j^e(2),\dots,{\mathbf v}_j^e(2^{p-1}))^T$.
  \end{description}
\end{example}

Below, we show how to apply the BP algorithm into the decoding of the EPR-LDPC code over binary input Gaussian channel.
\begin{example}\label{exam:Gaussian}
  We give a hybrid parallel decoder for the EPR-LDPC codes by using the BP decoder and the hard decision decoder in example~\ref{exam:Hard}. The BP decoder and hard decision decoder exchange decoding messages iteratively.
  We consider one decoding round is finished iff these two decoders have exchanged information once. A $(\mu,\nu)$ decoding round is a decoding round within which the BP decoder has performed $\mu$ times decoding iterations and the hard decision decoder has performed $\nu$ times decoding iterations.
  Different from example~\ref{exam:Hard}, we choose to transmit $\mathbf v^e$ instead of $\bar{\mathbf x}$ as in the general decoding procedure.
  Assume BPSK is utilized. Let $\mathbf y^e$ be the received sequence. Then the decoding process is described below.
  \begin{description}
    \item[Step~1]: Initialize the message for the $v^{th}$ extended bit node by $\mu_{v,c}^{(0)}=\frac{2}{\sigma^2}\mathbf y^e(v)$ and the message for the $c^{th}$ constraint node by $\omega_{c,v}^{(0)}=0$.
    \item[Step~2]: $\omega_{c,v}^{(l)}=-2\tanh^{-1}\left(\prod_{i^{\prime\prime}\in\mathcal{N}_c\backslash\{v\}}\tanh\left(\frac{-\mu_{i^{\prime\prime},c}^{(l-1)}}{2}\right)\right)$, where $\mathcal N_c$ is the extended bit nodes set connected to the $c^{th}$ constraint node.
    \item[Step~3]: $\mu_{v,c}^{(l)}=\frac{2}{\sigma^2}\mathbf y^e(v)+\sum_{j^{\prime\prime}\in\mathcal M_{v}\backslash\{c\}}\omega_{j^{\prime\prime},v}^{(l)}$, where $\mathcal M_{v}$ is the constraint nodes set connected to the $v^{th}$ extended bit node.
    \item[Step~4]: For iteration-$\mu$ in a $(\mu,\nu)$ decoding round, let the hard decision be $\hat{\mathbf v}^e$. We apply the hard decision decoder in example~\ref{exam:Hard} for $\nu$ times. If $\hat{\mathbf v}^e(v)=1$, $\mu_{v,c}^{(l)}=|\mu_{v,c}^{(l)}|$, else $\mu_{v,c}^{(l)}=-|\mu_{v,c}^{(l)}|$. Then, go to step 2.
    \item[Step~5]: For iteration-$h\mu\nu$, if the hard decision $\hat{\mathbf v}^e$ satisfies $\mathbf\Omega^e\hat{\mathbf v}^e=\mathbf 0$, then $\mathbf v^e=\hat{\mathbf v}^e$ and
    $\bar{\mathbf x}_j=({\mathbf v}_j^e(1),{\mathbf v}_j^e(2),\dots,{\mathbf v}_j^e(2^{p-1}))^T$ for the trivial case.
  \end{description}

If $\bar{\mathbf x}$ is transmitted in example~\ref{exam:Gaussian}, the initialization of the messages for the extended bit nodes can be performed as follows.
First, let $\mathbf v^e(v)=\sum_{i^{\prime}\in\mathcal S_v}\bar{\mathbf x}(i^{\prime})$,
where $\mathcal S_v$ is the set containing all the bit nodes connected to the $v^{th}$ extended bit node.
Let $\mathrm{hard}(\cdot)$ be the hard decision function.
And $\mathrm{hard}(\mathbf v^e(v))=\sum_{i^{\prime}\in\mathcal S_v}\mathrm{hard}(\bar{\mathbf x}(i^{\prime}))$.
Then the initialized messages for the extended bit node $\mathbf v^e(v)$ are
\begin{equation} \nonumber
  \left\{
  \begin{aligned}
    -\frac{2}{\sigma^2}\min_{i^{\prime}\in\mathcal S_v}(|\bar{\mathbf y}(i^{\prime})|), \text{     if } \mathrm{hard}(\mathbf v^e(v))=1, \\
    \frac{2}{\sigma^2}\min_{i^{\prime}\in\mathcal S_v}(|\bar{\mathbf y}(i^{\prime})|), \text{     if } \mathrm{hard}(\mathbf v^e(v))=0.
  \end{aligned}
  \right.
\end{equation}
The decoding procedure is the same as in example~\ref{exam:Gaussian}. Each bit in $\mathbf v^e$ is then transmitted over a copied Gaussian channel. To evaluate the performance of an EPR-LDPC code ensemble over copied Gaussian channels, we use the Monte-Carlo experiments for infinite LDPC codes introduced in \cite{Davey98montecarlo}.
That is, by simulating an infinite long EPR-LDPC code from the code ensemble, we evaluate the performance in terms of the minimum signal to noise ratio (MSNR), i.e., $T_b$, for which the average syndrome bit entropy reaches certain value after a number of decoding iterations.
\end{example}

\subsection{Code Optimization for The Binary Decoders} 
\label{sub:Profiles_optimization}
For the mother matrix $\mathbf\Lambda_p$, how to optimize the matrix labels for the non-binary decoders have been studied in \cite{Poulliat08BinImage,Davey98montecarlo}.
The authors in\cite{Poulliat08BinImage,Davey98montecarlo} propose several optimization methods based on the equivalent binary LDPC codes.
The degree distributions for the resulting $\bar{\mathbf H}$ can be efficiently calculated according to \cite{Yang12Erasure}.
As to the EPR-LDPC code, we can optimize the matrix labels according to Corollary~\ref{corol:L4_prob}.
For different associated $p$-reducible codes, the optimized matrix labels for the same $p$ will also be very different because different $p$-reducible codes use matrix labels set with different structures.
Moreover, to obtain enhanced decoding performance of EPR-LDPC codes, just optimizing the matrix labels is not enough.
We have to carefully construct the parity check matrix $\mathbf\Omega^e$ and the extended generator matrices set $\mathbf\Psi^e$ too.
In the following, we present a simple algorithm based on the proposed framework to achieve this goal.
That is, after we have the matrix labels optimized, we guarantee that each $\mathrm{wt}(\mathbf\Psi^e_j)$ is large enough and optimize the girth and degree distributions of the $\mathbf\Omega^e$.
$\mathbf\Lambda_p$ is assumed to be constructed by the modified progressive-edge-growth (PEG) algorithm.
One also can construct $\mathbf\Lambda_p$ by other random methods or with specific structures.
We can either fix the code length and change $\mathbf\Lambda_p$ or fix the $\mathbf\Lambda_p$ and change the code length for different purposes.
Details are given as follows.
\begin{description}
  \item[Step~1]: The binary parity check matrix $\bar{\mathbf H}$ is obtained by filling $\mathbf\Lambda_p$ with the optimized matrix labels of size $p\times p$ according to Corollary~\ref{corol:L4_prob}.
  Let $\psi>\frac{q}{2}-1$ and $\phi>0$ be two non-zero integers.
  Let $T_b$ be the MSNR in dB.
  We search for $\mathbf\Omega^e$ with the generator matrix set $\mathbf\Psi^e$ and the associated $\hat{\mathbf\Psi}^e_j,j=1,2,\ldots,N$ satisfying $\mathrm{wt}(\mathbf\Psi^e_j)\geqslant\psi,j=1,2,\ldots,N$,
  $\mathrm{wt}(\hat{\mathbf\Psi}^{e^i}_j)\geqslant\phi,i=1,2,\ldots,M$.
  And the MSNR for the resulting degree distributions does not exceed $T_b$ (for short block length codes, we drop the MSNR examinations).
  \item[Step~2]: For an even number $g_s$, we check if the girth of $\mathbf\Omega^e$ is not smaller than $g_s$.
  If the girth of $\mathbf\Omega^e$ is smaller than $g_s$, then $p=p+1$ and go to step 1.
  \item[Step~3]: When $p$ is large enough, we may change some $\mathbf\Psi^{e^{i_q}}_j$s which generate the rows across the associated matrix cycles to further eliminate some bit-level cycles. Then we update $\hat{\mathbf\Psi}^e_j,j=1,2,\ldots,N$ and its associated $\mathbf\Psi^e$.
\end{description}

\section{Simulation} 
\label{sec:Simulation}
\subsection{Different binary forms of a non-binary LDPC code} 
\label{sub:Different_binary_forms}
In this subsection, we present the simulation results for different representations of a non-binary LDPC code under different decoders. No undetectable error is observed in our simulations.
We consider the code over $\mathbb F_{8}$ of rate $R=0.5311$ with length-12000 bits.
To have a fair comparison, the code we have found has similar MSNR for different representations.
Let $\mathbf v^e=\mathbf v$ and $\mathbf\Omega^e_i\neq\mathbf\Omega_i$ for some $i$, i.e. the block lengths for $\mathbf\Omega^e$ and $\mathbf\Omega$ are the same.
We search for $\mathbf\Omega^e$ by using the method in Section~\ref{sub:Profiles_optimization}.
Degree distributions and MSNRs for ${\mathbf H}$, $\bar{\mathbf H}$, $\mathbf\Omega^e$ and $\mathbf\Omega$ are displayed in table~\ref{tab:DegreeDist_1} and table~\ref{tab:DegreeDist_2}.
The code defined with $\mathbf\Omega^e$ is $R=0.5355$.
The MSNR for $\mathbf\Omega^e$ is $E_b/N_0=0.73$dB.
The MSNR for $\mathbf\Omega$ is $E_b/N_0=0.68$dB, while the capacity limit is $E_b/N_0=0.30$dB.
The comparison is shown in Fig~\ref{fig:Decoders}, where
HEPR (hard decision decoder for the EPR-LDPC code) is the extended hard decision decoder for $\mathbf\Omega^e$,
SEPR (soft decision decoder for the EPR-LDPC code) is the hybrid parallel decoder for $\mathbf\Omega^e$,
QSPA is the $q$-ary sum-product decoder for $\mathbf H$,
SEB (soft decision decoder for the equivalent binary LDPC code ) is the binary BP decoder for $\bar{\mathbf H}$ and
SER (soft decision decoder for the extended binary representation) is the hybrid parallel decoder for $\mathbf\Omega$.
Due to the short length bit-level cycles in $\bar{\mathbf H}$,
SEB suffers from a performance loss of about $1$dB.
Decoding complexity per each check-sum for QSPA is $O(q^2)$.
In our simulation, SEPR achieves second place with much lower decoding complexity
while the performance gap to QSPA is within $0.2$dB.
Moreover, SEPR outperforms SER for the same block length.

Consider the non-binary LDPC code of rate half over $\mathbf F_{16}$ characterized by
\begin{eqnarray}\nonumber
	\lambda(x)&=&0.303x + 0.337x^2 + 0.04x^3 + 0.113x^4 + \\\nonumber
	&&0.122x^6 + 0.085x^{12},\\\nonumber
	\rho(x)&=&0.85x^5 + 0.15x^6.
\end{eqnarray}
The associated EPR-LDPC code with block length-2048 bits is optimized by the algorithm in section~\ref{sub:Profiles_optimization}. Then we give the performance comparison under different decoders in Fig~\ref{fig:Decoders_moderate}.
In this example, SEPR also outperforms SER and is more close to the QSPA than SER.
\begin{figure}[!hbtp]
\begin{center}
\includegraphics[width=.48\textwidth]{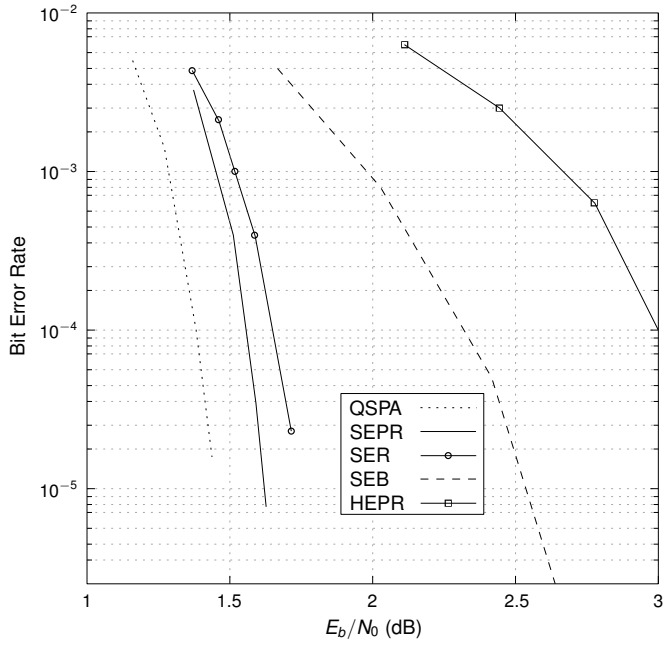}
\end{center}
\vspace*{-.3cm}
\caption{Performance comparison between different representations. Maximum 40 iterations, $\mu=16$ and $\nu=4$.}
\label{fig:Decoders}\vspace*{-.3cm}
\end{figure}

\begin{figure}[!hbtp]
\begin{center}
\includegraphics[width=.48\textwidth]{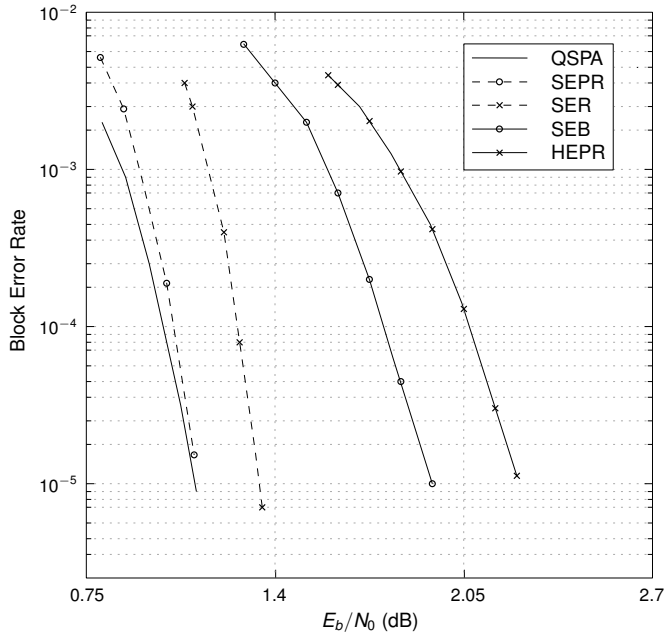}
\end{center}
\vspace*{-.3cm}
\caption{Performance comparison between different representations. Maximum 200 iterations, $\mu=16$ and $\nu=4$.}
\label{fig:Decoders_moderate}\vspace*{-.3cm}
\end{figure}
\subsection{Short length optimization} 
\label{sub:Finite_length}
Short length cycles can cause severe performance degradation for short length block codes.
We eliminate the short length cycles for $\mathbf\Omega^e$ in this subsection and give the comparative results for different outputs of the optimization in section~\ref{sub:Profiles_optimization}  which are displayed in table~\ref{tab:Algo_outputs}.
Let $\bar{\mathbf x}$ be the bit sequence transmitted over the binary input Gaussian channels.
${\mathbf H}$ is constructed by modified progressive edge growth (PEG) method and the hybrid parallel decoder is adopted. Let the block length be 360 bits.
Let $M_s=\sum_j\mathrm{wt}(\mathbf\Psi^e_j)$ be the length of $\mathbf v^e$ and $g_s$ be the girth.
If the non-binary (3,6)-regular LDPC code is adopted, we give the performance comparison in Fig~\ref{fig:Outputs}.
The 32-ary LDPC code achieves the best performance in our simulation due to the optimization both on the girth and field size.

\begin{table}[tb]
  \caption{Different outputs from Section~\ref{sub:Profiles_optimization}. $q$ is the field size, $g_s$ is the girth, $M_s$ is the length of $\mathbf v^e_j$ and $\frac{q}{2}-1$ is the sufficient condition for the successful decoding from Theorem~\ref{them:resolvable}.}
  \label{tab:Algo_outputs}
    \begin{center}
      \begin{tabular}{ccccc}
      \toprule
      $\mathbf v^e_j$ & $q=2^p$ & $g_s$ & $M_s$ & $\frac{q}{2}-1$ \\ \hline
      \multirow{2}*{$\mathbf v^e_j\preccurlyeq \mathbf v_j$} & 8 & 6 & 734 & 3 \\ \cmidrule(r){2-5}
      & 16 & 8 & 1415 & 7\\ \hline
      \multirow{2}*{$\mathbf v^e_j\prec \mathbf v_j$} & 8 & 6 & 507 & 3\\
      & 32 & 10 & 2132 & 15 \\
      \bottomrule
    \end{tabular}
  \end{center}
\end{table}

\begin{figure}[!hbtp]
\begin{center}
\includegraphics[width=.48\textwidth]{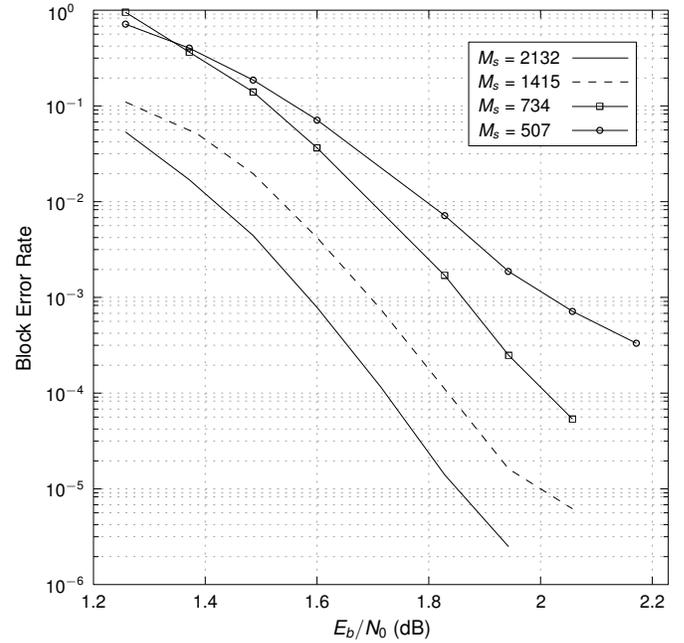}
\end{center}
\vspace*{-.3cm}
\caption{Performance comparison between different outputs in Table~\ref{tab:Algo_outputs}. }
\label{fig:Outputs}\vspace*{-.3cm}
\end{figure}
\subsection{Comparison of codes from literature} 
\label{sub:Codes_literature}
Consider the non-binary LDPC code of rate half over $\mathbf F_{16}$ in Section~\ref{sub:Different_binary_forms}.
We compare the performance of the EPR-LDPC code with the optimized non-binary cycle LDPC codes (optimized under similar assumptions) and the girth optimized binary LDPC codes in the literature.
In Fig~\ref{fig:Codes_lit}, SPB59 is the sphere packing bound for block length-2048 bits.
The codes from\cite{Huang09Cycle} is the non-binary cycle code with length 5376 bits.
The code from\cite{Poulliat08BinImage} is the non-binary cycle code with length 2048 bits.
The code from\cite{Voicila2008SplitCode} is the non-binary cycle code with length 3000 bits.
These codes are optimized for non-binary decoders.
The code from\cite{Christian2009QcGirth} is the (3,6) QC-LDPC code with length 2294 bits.
The code from\cite{Guohua2010QcGirth} is the PEG-LDPC code with length 2694 bits.
These codes are optimized for binary decoders.
Our irregular EPR-LDPC code with block length-2048 bits (decoded by the decoder in example~\ref{exam:Gaussian})
outperforms others even with much shorter length than the codes in \cite{Huang09Cycle}
and much smaller field size than the code in \cite{Poulliat08BinImage}.
The EPR-LDPC code has achieved a maximum $0.8$dB (at BER=$10^{-4}$) performance gain compared to the optimized non-binary cycle LDPC codes with a much lower decoding complexity.
\subsection{Binary Erasure Channel} 
\label{sub:Sim_Bec}
In this subsection, we investigate the performance of the EPR-LDPC code
over BEC with different size of matrix labels.
Consider the $p$-reducible code of rate half. Let $\mathbf\Lambda_{p_2}$ be a $500\times 1000$ mother matrix.
The degree distributions are displayed in table~\ref{tab:DegreeDist_2}.
The hybrid parallel decoder in Example~\ref{exam:Gaussian} is altered for BEC and the MSNR is $0.49$.
If $p=2,3,4,5,6$ and $\mathbf\Lambda_{p_2}$ is filled with randomly generated matrix labels, we compare the performance for different $\mathbf\Omega$ in Fig~\ref{fig:EPR_BEC}.
As the block length ($N_q=1000(q-1)$) increases, the decoding curves approach the MSNR as we expect.

\begin{figure}[!hbtp]
\begin{center}
\includegraphics[width=.48\textwidth]{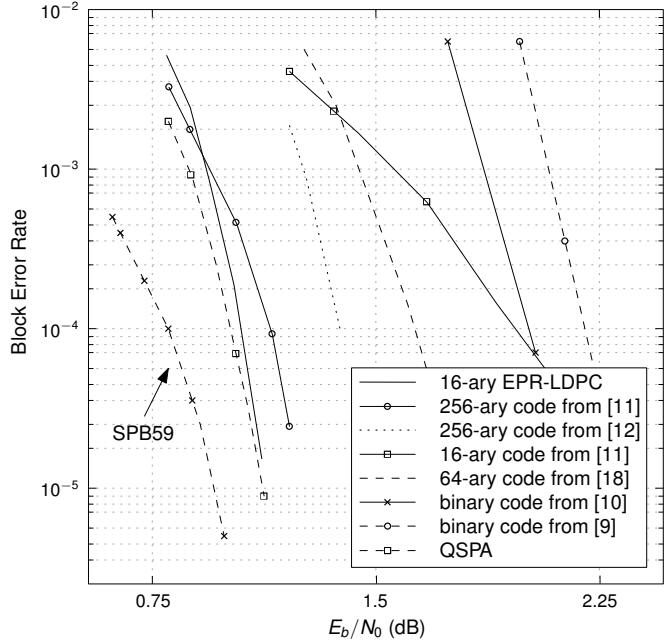}
\end{center}
\vspace*{-.3cm}
\caption{The EPR-LDPC code compared with codes from literature. }
\label{fig:Codes_lit}\vspace*{-.3cm}
\end{figure}

\begin{figure}[!hbtp]
\begin{center}
\includegraphics[width=.48\textwidth]{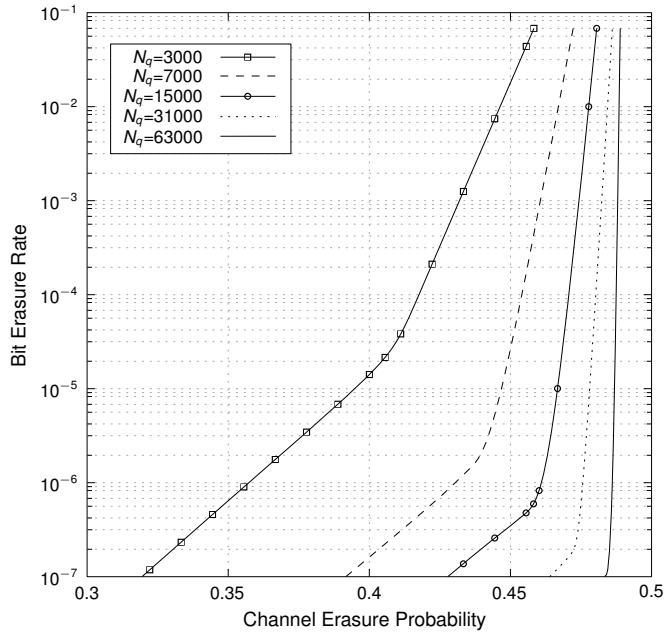}
\end{center}
\vspace*{-.3cm}
\caption{EPR-LDPC code over BEC with different size of matrix labels. $N_q$ is the block length in bits. }
\label{fig:EPR_BEC}\vspace*{-.3cm}
\end{figure}
\begin{table}[tb]
  \caption{MSNRs for different degree distributions.}
  \label{tab:DegreeDist_1}
    \begin{center}
      \begin{tabular}{cc | cc}
        \toprule
        \multicolumn{2}{c}{$\mathbf\Lambda_{p_1}$} & \multicolumn{2}{c}{$\mathbf H$} \\ \hline
        $\lambda(x)$ & $\rho(x)$ & $\lambda(x)$ & $\rho(x)$ \\ \hline \hline
0.183$x$      &  0.047$x^{3}$ & 0.153$x$      & 0.010$x^4$    \\
0.275$x^2$    &  0.186$x^{4}$ & 0.261$x^{2}$  & 0.029$x^5$    \\
0.096$x^3$    &  0.308$x^{5}$ & 0.138$x^{3}$  & 0.074$x^6$    \\
0.024$x^4$    &  0.277$x^{6}$ & 0.051$x^{4}$  & 0.178$x^7$    \\
0.049$x^5$    &  0.140$x^{7}$ & 0.047$x^{5}$  & 0.272$x^8$    \\
0.049$x^6$    &  0.038$x^{8}$ & 0.046$x^{6}$  & 0.248$x^9$    \\
0.026$x^7$    &  0.004$x^{9}$ & 0.026$x^{7}$  & 0.136$x^{10}$ \\
0.007$x^8$    &               & 0.007$x^{8}$  & 0.044$x^{11}$ \\
0.001$x^9$    &               & 0.001$x^{9}$  & 0.008$x^{12}$ \\
0.002$x^{16}$ &               & 0.001$x^{17}$ &               \\
0.008$x^{17}$ &               & 0.004$x^{18}$ &               \\
0.019$x^{18}$ &               & 0.012$x^{19}$ &               \\
0.032$x^{19}$ &               & 0.021$x^{20}$ &               \\
0.041$x^{20}$ &               & 0.029$x^{21}$ &               \\
0.039$x^{21}$ &               & 0.031$x^{22}$ &               \\
0.030$x^{22}$ &               & 0.026$x^{23}$ &               \\
0.020$x^{23}$ &               & 0.019$x^{24}$ &               \\
0.014$x^{24}$ &               & 0.015$x^{25}$ &               \\
0.014$x^{25}$ &               & 0.015$x^{26}$ &               \\
0.016$x^{26}$ &               & 0.019$x^{27}$ &               \\
0.017$x^{27}$ &               & 0.021$x^{28}$ &               \\
0.015$x^{28}$ &               & 0.020$x^{29}$ &               \\
0.011$x^{29}$ &               & 0.016$x^{30}$ &               \\
0.007$x^{30}$ &               & 0.011$x^{31}$ &               \\
0.004$x^{31}$ &               & 0.006$x^{32}$ &               \\
0.002$x^{32}$ &               & 0.003$x^{33}$ &               \\
0.001$x^{33}$ &               & 0.001$x^{34}$ &               \\
        \hline \hline
$T_b$ & -0.18dB & \multicolumn{2}{c}{0.59dB} \\ \hline
        \bottomrule
      \end{tabular}
    \end{center}
\end{table}

\begin{table}[tb]
  \caption{MSNRs for different degree distributions.}
  \label{tab:DegreeDist_2}
    \begin{center}
      \begin{tabular}{cc | cc}
        \toprule
        \multicolumn{2}{c}{$\mathbf\Omega^e$} & \multicolumn{2}{c}{$\mathbf\Lambda_{p_2}$} \\ \hline
        $\lambda(x)$ & $\rho(x)$ & $\lambda(x)$ & $\rho(x)$ \\ \hline \hline
  0.138$x$       &   0.002$x^{3}$  &   0.102$x$       &   0.02$x^{7}$ \\
  0.235$x^{2}$   &   0.004$x^{4}$  &   0.183$x^{2}$   &   0.095$x^{8}$ \\
  0.140$x^{3}$   &   0.007$x^{5}$  &   0.113$x^{3}$   &   0.205$x^{9}$ \\
  0.084$x^{4}$   &   0.051$x^{6}$  &   0.039$x^{4}$   &   0.260$x^{10}$ \\
  0.075$x^{5}$   &   0.176$x^{7}$  &   0.016$x^{5}$   &   0.218$x^{11}$ \\
  0.052$x^{6}$   &   0.291$x^{8}$  &   0.028$x^{6}$   &   0.128$x^{12}$ \\
  0.024$x^{7}$   &   0.268$x^{9}$  &   0.040$x^{7}$   &   0.054$x^{13}$ \\
  0.006$x^{8}$   &   0.147$x^{10}$ &   0.033$x^{8}$   &   0.016$x^{14}$ \\
  0.001$x^{9}$   &   0.048$x^{11}$ &   0.026$x^{9}$   &   0.003$x^{15}$ \\
  0.001$x^{13}$  &   0.006$x^{12}$ &   0.032$x^{10}$ \\
  0.003$x^{14}$  &                 &   0.038$x^{11}$ \\
  0.005$x^{15}$  &                 &   0.030$x^{12}$ \\
  0.008$x^{16}$  &                 &   0.016$x^{13}$ \\
  0.010$x^{17}$  &                 &   0.006$x^{14}$ \\
  0.013$x^{18}$  &                 &   0.001$x^{15}$ \\
  0.016$x^{19}$  &                 &   0.001$x^{59}$ \\
  0.020$x^{20}$  &                 &   0.001$x^{60}$ \\
  0.021$x^{21}$  &                 &   0.003$x^{61}$ \\
  0.021$x^{22}$  &                 &   0.006$x^{62}$ \\
  0.019$x^{23}$  &                 &   0.010$x^{63}$ \\
  0.017$x^{24}$  &                 &   0.015$x^{64}$ \\
  0.016$x^{25}$  &                 &   0.021$x^{65}$ \\
  0.016$x^{26}$  &                 &   0.027$x^{66}$ \\
  0.016$x^{27}$  &                 &   0.032$x^{67}$ \\
  0.014$x^{28}$  &                 &   0.035$x^{68}$ \\
  0.011$x^{29}$  &                 &   0.034$x^{69}$ \\
  0.008$x^{30}$  &                 &   0.031$x^{70}$ \\
  0.005$x^{31}$  &                 &   0.026$x^{71}$ \\
  0.002$x^{32}$  &                 &   0.020$x^{72}$ \\
  0.001$x^{33}$  &                 &   0.014$x^{73}$ \\
                 &                 &   0.009$x^{74}$ \\
                 &                 &   0.006$x^{75}$ \\
                 &                 &   0.003$x^{76}$ \\
                 &                 &   0.002$x^{77}$ \\
                 &                 &   0.001$x^{78}$ \\ \hline \hline
$T_b$ & 0.73dB & \multicolumn{2}{c}{0.49} \\ \hline
        \bottomrule
      \end{tabular}
    \end{center}
\end{table}
\section{Conclusion}
When there is no symbol-level cycle, the EPR-LDPC code will not have any bit-level cycle. Superior to the extended binary representation, the parity check matrix of EPR-LDPC code will not be composed of disjoint sub-matrices too. When there exists short length symbol-level cycles, the EPR-LDPC code can largely avoid the corresponding short length bit-level cycles.
Decoding of the EPR-LDPC code by the proposed decoders (the hybrid hard-decision decoder and the hybrid parallel decoder) is capable of achieving computational complexities of $O(m_s)$ where $m_s<q$.
Simulations show that the EPR-LDPC code outperforms the extended binary representation with the same block length.
In addition, compared to the optimized non-binary cycle LDPC codes under non-binary decoders, the EPR-LDPC code under the proposed decoder achieves a maximum $0.8$dB performance gain.

\ifCLASSOPTIONcaptionsoff
  \newpage
\fi

\bibliographystyle{IEEEtran}
\bibliography{IEEEabrv,QldpcEnDec}

\end{document}